\documentclass[11pt]{article}

\usepackage{amsmath}
\usepackage{amsfonts}
\usepackage{amsthm}
\usepackage{amssymb}

\usepackage{pgfplots}
\pgfplotsset{compat=1.13}

\usepackage{enumerate}
\usepackage{hyperref}
\usepackage{fullpage}

\newtheorem{theorem}{Theorem}

\newtheorem{corollary}[theorem]{Corollary}
\newtheorem{definition}[theorem]{Definition}
\newtheorem{lemma}[theorem]{Lemma}

\newtheorem{problem}[theorem]{Problem}

\newcommand{\E}{\mathop{\mathbb{E}}}

\newcommand{\ApxCount}{\mathsf{ApxCount}_{N,w}}
\newcommand{\AndApxCount}{\mathsf{AND}_2 \circ \mathsf{ApxCount}_{N,w}}

\usepackage[
    backend=bibtex,
    style=numeric-comp,
    natbib=true,
    sortlocale=en_US,
    url=false, 
    doi=true,
    eprint=true,
    isbn=false
]{biblatex}

\addbibresource{approxcounting.bib}

\begin{document}

\title{$\mathsf{QMA}$ Lower Bounds for Approximate Counting}
\author{William Kretschmer\thanks{University of Texas at Austin. \ Email:
kretsch@cs.utexas.edu. \ Supported by a Simons Investigator award.}}
\date{}
\maketitle

\begin{abstract}
We prove a query complexity lower bound for $\mathsf{QMA}$ protocols that solve approximate counting: estimating the size of a set given a membership oracle. This gives rise to an oracle $A$ such that $\mathsf{SBP}^A \not\subset \mathsf{QMA}^A$, resolving an open problem of Aaronson \cite{1808.02420}. Our proof uses the polynomial method to derive a lower bound for the $\mathsf{SBQP}$ query complexity of the $\mathsf{AND}$ of two approximate counting instances. We use Laurent polynomials as a tool in our proof, showing that the ``Laurent polynomial method'' can be useful even for problems involving ordinary polynomials.
\end{abstract}

\section{Introduction}
Among counting complexity classes, the complexity class $\mathsf{SBP}$ captures approximate counting: estimating a $\mathsf{\#P}$ function within a constant multiplicative factor. Despite having a definition in terms of counting complexity, $\mathsf{SBP}$ is known to lie between two interactive proof classes. In particular, Bohler et. al. \cite{bohler}, who defined $\mathsf{SBP}$, showed that $\mathsf{MA} \subseteq \mathsf{SBP} \subseteq \mathsf{AM}$. Thus, under plausible derandomization assumptions \cite{amnp}, one would have $\mathsf{NP} = \mathsf{MA} = \mathsf{SBP} = \mathsf{AM}$.


In this work, we study the relation between $\mathsf{SBP}$ and $\mathsf{QMA}$. The containment $\mathsf{SBP} \subseteq \mathsf{QMA}$ would follow trivially if $\mathsf{SBP}$ collapses to $\mathsf{MA}$, but it is unclear whether quantum Merlin makes proving this containment any easier. In the relativized world, Aaronson \cite{1808.02420} recently asked whether there might exist an oracle $A$ relative to which $\mathsf{SBP}^A \not\subset \mathsf{QMA}^A$. He noted that exhibiting such an oracle is equivalent to ruling out a black box $\mathsf{QMA}$ protocol for approximate counting. We formally define the approximate counting problem as follows:

\begin{problem}
The approximate counting problem $\ApxCount$ is: given a membership oracle for a set $A \subseteq [N] = \{1, 2, \ldots, N\}$ promised that either $|A| \le w$ (``no'' instance) or $|A| \ge 2w$ (``yes'' instance), determine which of these is the case.
\end{problem}

More generally, one can consider the problem of distinguishing $|A| \le w$ or $|A| \ge (1 + \epsilon)w$ where $\epsilon$ is an arbitrary constant, or may even depend on $N$ and $w$. However, we restrict our attention to fixed $\epsilon$ because $\mathsf{SBP}$ precisely captures approximate counting in the case where $A$ is the set of accepting paths of a nondeterministic polynomial-time Turing machine (and so $|A|$ is a $\mathsf{\#P}$ function), $w$ is an $\mathsf{FP}$ function, and $\epsilon = 1$. Thus, an $\mathsf{SBP}$-$\mathsf{QMA}$ oracle separation would follow if for some function $w(N)$, any $\mathsf{QMA}$ protocol for $\ApxCount$ requires either a $(\log N)^{\omega(1)}$-size witness, or else $(\log N)^{\omega(1)}$ queries.

To prove such a lower bound, we study the query complexity of $\ApxCount$ in the context of the complexity class $\mathsf{SBQP}$, a quantum analogue of $\mathsf{SBP}$ first defined by Kuperberg \cite{kuperberg}. $\mathsf{SBQP}$ is in some sense the smallest ``natural'' complexity class that contains both $\mathsf{SBP}$ and $\mathsf{QMA}$. Indeed, just as $\mathsf{MA}^A \subseteq \mathsf{SBP}^A$ for any oracle $A$, so is $\mathsf{QMA}^A \subseteq \mathsf{SBQP}^A$ for any oracle $A$.

Note that one cannot hope to prove a nontrivial $\mathsf{SBQP}$ query complexity lower bound for approximate counting, as the $\mathsf{SBQP}$ query complexity of $\ApxCount$ is $O(1)$ for any $N$ and $w$. Instead, we use the observation that $\mathsf{SBP}$ is not obviously closed under intersection\footnote{There even exists an oracle relative to which $\mathsf{SBP}$ is \textit{not} closed under intersection \cite{goos}, and $\mathsf{SBP}$'s closure or non-closure under intersection in the unrelativized world remains an open problem.}. In this light, we consider the analogous intersection problem $\AndApxCount$ wherein we are given a pair of sets $A_0, A_1 \subseteq [N]$ and asked to determine whether both sets have size at least $2w$, or whether one of the sets has size at most $w$\footnote{As a technicality, we typically assume that both sets satisfy the $\ApxCount$ promise, though strictly speaking only the smaller set needs to satisfy the promise on a ``no'' instance of $\AndApxCount$.}. Because $\mathsf{QMA}$ is closed under intersection, a $\mathsf{QMA}$ protocol for $\ApxCount$ that receives a witness of size $(\log N)^{O(1)}$ and makes $(\log N)^{O(1)}$ queries implies (via in-place amplification) the existance of an $\mathsf{SBQP}$ algorithm for $\AndApxCount$ that makes $(\log N)^{O(1)}$ queries.

Our main result is that no such $\mathsf{SBQP}$ algorithm exists. Specifically, we show that any $\mathsf{SBQP}$ algorithm for $\AndApxCount$ requires $\Omega\left(\min\left\{\sqrt{w},\sqrt{N/w}\right\}\right)$ queries. We also modify this argument to show that $\Omega(w)$ queries are necessary when $N = 2^{\Omega(w)}$. This in turn shows that any $\mathsf{QMA}$ protocol for $\ApxCount$ that receives a witness of size $m$ and makes $T$ queries must satisfy $m \cdot T = \Omega\left(\min\left\{\sqrt{w},\sqrt{N/w}\right\}\right)$ (or $m \cdot T = \Omega(w)$ when $N = 2^{\Omega(w)}$). Our proof uses the celebrated polynomial method of Beals et. al. \cite{beals}: for an algorithm that makes $T$ queries, we construct a bivariate polynomial $p(x, y)$ of degree at most $2T$ that equals the probability that the algorithm accepts on a random $\AndApxCount$ instance where $|A_0|$ and $|A_1|$ are of fixed size. We then show that if the algorithm is an $\mathsf{SBQP}$ algorithm that correctly solves $\AndApxCount$, then any such polynomial must have large degree.

In our view, the proof of this degree lower bound (Theorem \ref{thm:L}) is of independent mathematical interest. At a high level, from this polynomial $p(x, y)$, we take a parametric curve through the $xy$ plane to construct a univariate Laurent polynomial $q(t)$ of the same degree\footnote{A Laurent polynomial $q(t)$ can contain both positive and negative integer powers of $t$. Formally, we can write $q(t) = q_+(t) + q_-(1/t)$ where $q_+$ and $q_-$ are ordinary polynomials. We follow the convention that the degree of a Laurent polynomial $q(t)$ is $\deg q = \max \{\deg q_+, \deg q_-\}$. This is for consistency with the definition of degree for multivariate polynomials, as in the polynomial $q_+(x) + q_-(y)$ (i.e. viewing $x=t$ and $y=1/t$ as separate indeterminates).}. Crucially, we leverage the symmetries of the problem to view this Laurent polynomial as an ordinary univariate polynomial of the same degree. Finally, we appeal to classical results in approximation theory to argue that this univariate polynomial must have large degree. We find this application of Laurent polynomials surprising, particularly because the recent result of Aaronson on the $\mathsf{BQP}$ query complexity of approximate counting in the QSamples+queries model \textit{also} used Laurent polynomials, albeit for an entirely different reason \citep{1808.02420}. For Aaronson's result, Laurent polynomials are fundamentally necessary just to describe the acceptance probability of the algorithm, while in our case ordinary polynomials suffice. This suggests that the ``Laurent polynomial method'' may prove to be useful even for problems involving ordinary polynomials.

\section{Preliminaries}
Though $\mathsf{SBP}$ and $\mathsf{SBQP}$ can be defined in terms of counting complexity functions (as above), for our purposes it is easier to work with the following equivalent definitions (see B\"ohler et. al. \cite{bohler}):

\begin{definition}\label{def:sbp_sbqp}
The complexity class $\mathsf{SBP}$ consists of the languages $L$ for which there exists a probabilistic polynomial time algorithm $M$ and a polynomial $\sigma$ with the following properties:
\begin{enumerate}
\item If $x \in L$, then $\Pr\left[M(x) \text{ accepts}\right] \ge 2^{-\sigma(|x|)}$.
\item If $x \not\in L$, then $\Pr\left[M(x) \text{ accepts}\right] \le 2^{-\sigma(|x|)}/2$.
\end{enumerate}
The complexity class $\mathsf{SBQP}$ is defined analogously, wherein the classical algorithm is replaced with a quantum algorithm.
\end{definition}
A classical (respectively, quantum) algorithm that satisfies the above promise for a particular language will be referred to as an $\mathsf{SBP}$ (respectively, $\mathsf{SBQP}$) algorithm throughout this paper. Using this definition, a tight query complexity relation between $\mathsf{QMA}$ protocols and $\mathsf{SBQP}$ algorithms follows from the procedure of Marriott and Watrous \cite{marriott}, which shows that one can exponentially improve the soundness and completeness errors of a $\mathsf{QMA}$ protocol without increasing the witness size (see Aaronson \cite{aaronson_szk} for a proof of the following lemma):

\begin{lemma}[Guessing lemma]\label{lem:guessing}
Suppose $V^A$ is a $\mathsf{QMA}$ verifier for some problem and that $V^A$ makes $T$ queries to an oracle $A$, receives an $m$-qubit witness, and has soundness and completeness errors $1/3$. Then there is an $\mathsf{SBQP}$ algorithm $Q^A$ for the same problem that receives no witness and makes $O(m \cdot T)$ queries.
\end{lemma}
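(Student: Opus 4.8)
The plan is to combine in-place amplification of the $\mathsf{QMA}$ verifier with the standard device of letting the $\mathsf{SBQP}$ algorithm ``guess'' the witness by running it on the maximally mixed state.

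First I would apply the Marriott--Watrous amplification procedure \cite{marriott} to $V^A$. The crucial feature of this procedure is that it drives both the soundness and completeness errors down to $2^{-\Omega(k)}$ using $O(k)$ sequential re-verifications on the \emph{same} $m$-qubit witness register; in particular the witness does not grow, while the query count becomes $O(k \cdot T)$. Taking $k = \Theta(m)$ with an appropriate constant, I obtain a verifier $V'^A$ for the same problem that still receives an $m$-qubit witness, makes $O(m \cdot T)$ queries, has completeness error at most $1/3$, and has soundness error at most $2^{-(m+2)}$.

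Next I would let $Q^A$ be the algorithm that prepares the maximally mixed state $\rho = 2^{-m} I$ on the $m$-qubit witness register, runs $V'^A$ with $\rho$ in place of the witness, and accepts exactly when $V'^A$ accepts. On a ``yes'' instance there is a pure witness $|\psi\rangle$ accepted by $V'^A$ with probability at least $2/3$; writing $\rho$ as the uniform mixture over an orthonormal basis containing $|\psi\rangle$ and using linearity, $Q^A$ accepts with probability at least $2^{-m} \cdot \tfrac{2}{3} \ge 2^{-(m+1)}$. On a ``no'' instance every pure state is accepted with probability at most $2^{-(m+2)}$, so by linearity $\rho$ is accepted with probability at most $2^{-(m+2)}$. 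Hence $Q^A$ separates the two cases with threshold probability $2^{-(m+1)}$, i.e.\ it is an $\mathsf{SBQP}$ algorithm for the same problem, it receives no witness, and it makes $O(m \cdot T)$ queries.

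The step I expect to be the real obstacle is the amplification. One cannot simply run $V^A$ independently many times on fresh copies of the witness: that would inflate the witness to $\Theta(m^2)$ qubits (or more), which pushes the ``yes''-case acceptance probability down to $2^{-\Theta(m^2)}$ while only improving the soundness error to $2^{-\Theta(m)}$ --- far too large to lie below half of $2^{-\Theta(m^2)}$. It is precisely the \emph{in-place} character of the Marriott--Watrous procedure, reducing the error without duplicating the witness, that keeps the witness at $m$ qubits and thereby makes the $O(m \cdot T)$ bound go through; the rest of the argument is routine bookkeeping with the $\mathsf{SBQP}$ promise.
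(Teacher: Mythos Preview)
Your proposal is correct and matches the approach the paper indicates: the paper does not give its own proof but cites Aaronson \cite{aaronson_szk} and explicitly attributes the lemma to Marriott--Watrous in-place amplification followed by guessing the witness, which is precisely what you do. The analysis of the acceptance probabilities and the role of keeping the witness at $m$ qubits are exactly the standard argument.
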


Because we study oracle intersection problems, it is often convenient to think of an algorithm as having access to \textit{two} oracles, wherein the first bit in the oracle register selects the choice of oracle. As a consequence, we need a slight generalization of a now well-established fact in quantum complexity: that the acceptance probability of a quantum algorithm with an oracle can be expressed as a polynomial in the bits of the oracle string.

\begin{lemma}[Symmetrization with two oracles]\label{lem:partial_sym}
Suppose $Q^{A_0,A_1}$ is a quantum algorithm that makes $T$ queries to a pair of oracles $A_0, A_1 \subseteq [N]$. Then there exists a bivariate real polynomial $p(x, y)$ of degree at most $2T$ such that:
$$p(x, y) = \E_{\substack{|A_0|=x,\\|A_1|=y}}\left[\Pr[ Q^{A_0,A_1} \text{ accepts}] \right]$$
for all $x, y \in [N]$.
\end{lemma}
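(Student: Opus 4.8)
The plan is to reduce the two-oracle case to the standard single-oracle polynomial method of Beals et al.\ combined with symmetrization over the symmetric group. First I would recall the standard setup: a $T$-query quantum algorithm with access to the combined oracle (a string of $2N$ bits, indexed by a choice bit together with an element of $[N]$) has an acceptance probability that is a real multilinear polynomial in those $2N$ bits of degree at most $2T$. Write this polynomial as $P(z_1,\dots,z_N,z_1',\dots,z_N')$, where $z_i$ indicates membership of $i$ in $A_0$ and $z_i'$ membership of $i$ in $A_1$.

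Next I would symmetrize. The quantity we want is the expectation of the acceptance probability when $A_0$ is a uniformly random subset of $[N]$ of size $x$ and $A_1$ is an independent uniformly random subset of size $y$. Averaging $P$ over the product action $S_N \times S_N$ (permuting the two coordinate blocks independently) does not change this expectation, since both distributions are invariant under coordinate permutations. By the classical symmetrization argument (Minsky--Papert, as used by Beals et al.), any polynomial that is symmetric under permuting a block of $0/1$ variables $z_1,\dots,z_N$ and evaluated only on inputs of fixed Hamming weight can be written as a univariate polynomial in that Hamming weight, of degree at most the original degree in those variables. Applying this to each block in turn — first symmetrize and collapse the $z$-block to a polynomial in $x = \sum z_i$ of degree at most the $z$-degree, then symmetrize and collapse the $z'$-block to a polynomial in $y = \sum z_i'$ of degree at most the $z'$-degree — yields a bivariate polynomial $p(x,y)$ whose total degree is bounded by the original total degree $2T$, and which agrees with the desired expectation for all integer $x,y \in \{0,1,\dots,N\}$. (One has to be a little careful that the two symmetrization steps compose: after symmetrizing the first block, the result is still multilinear and symmetric in the second block, so the second step applies verbatim; and collapsing one block does not increase the degree in the other.)

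The one technical point worth spelling out is why collapsing the Hamming weight of a block to a single variable does not blow up the degree: after full symmetrization in $z_1,\dots,z_N$, the polynomial is a symmetric multilinear polynomial, hence a linear combination of elementary symmetric polynomials $e_0,\dots,e_d$ with $d$ at most the degree in the $z$-block; and on the slice $\sum z_i = x$ with $0/1$ entries, $e_k$ equals $\binom{x}{k}$, a degree-$k$ polynomial in $x$. So the $z$-degree controls the $x$-degree, the $z'$-degree controls the $y$-degree, and the sum of the two is at most $2T$; this gives $\deg p \le 2T$ as a bound on total degree, which is what the statement asserts. I would then note that the agreement holds for all $x,y$ in the stated range $[N]$ (really $\{0,\dots,N\}$, which suffices).

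I don't expect a serious obstacle here — this is essentially the ``two oracles are one oracle over a bigger alphabet'' observation plus iterated Minsky--Papert symmetrization — but the step requiring the most care is verifying that the two rounds of symmetrization interact correctly, i.e.\ that symmetrizing and collapsing the first block preserves both multilinearity and the symmetry in the second block so that the second round is legitimate, and that neither collapse inflates the degree contributed by the other block. Given that, the construction of $p$ and the degree bound $2T$ follow immediately.
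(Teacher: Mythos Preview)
Your proposal is correct and follows essentially the same approach as the paper: concatenate the two oracles into a single $2N$-bit string, invoke Beals et al.\ to obtain a degree-$2T$ multilinear polynomial in the bits, and then apply Minsky--Papert symmetrization to each block in turn to collapse to a bivariate polynomial in the Hamming weights. You spell out the degree-preservation and compatibility of the two symmetrization steps in more detail than the paper does, but the argument is the same.
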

\begin{proof}
We can equivalently view the oracles as strings in $\{0,1\}^N$ such that the algorithm makes queries to a single oracle $A = A_0|A_1$ which is the concatenation of the two oracles. Then, Lemma 4.2 of Beals et. al. \cite{beals} tells us that there exists a real polynomial $r(A)$ of degree at most $2T$ such that $r(A) = r(A_0, A_1) = \Pr[ Q^{A_0,A_1} \text{ accepts}]$ for any $A \in \{0,1\}^{2N}$ that is a string of $\{0,1\}$ variables. We then apply the symmetrization lemma of Minsky and Papert \cite{perceptrons} to symmetrize $r$, first with respect to $A_0$, then with respect to $A_1$:
$$p_0(x, A_1) = \E_{|A_0|=x} r(A_0,A_1) =  \E_{|A_0|=x}\left[\Pr[ Q^{A_0,A_1} \text{ accepts}] \right]$$
$$p(x, y) = \E_{|A_1|=y} p_0(x,A_1) = \E_{\substack{|A_0|=x,\\|A_1|=y}}\left[\Pr[ Q^{A_0,A_1} \text{ accepts}] \right]$$
\end{proof}

We now state some useful facts from approximation theory that will be useful in our proofs. We start with the Markov brothers' inequality:
\begin{lemma}[Markov]\label{lem:markov}
Let $p$ be a real polynomial of degree $d$, and suppose that:
$$\max_{x,y\in[a,b]}|p(x)-p(y)| \le H.$$
Then for all $x \in [a, b]$, the derivative $p'$ satisfies:
$$|p'(x)| \le \frac{H}{b-a}d^2.$$
\end{lemma}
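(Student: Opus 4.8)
The plan is to reduce the stated bound to the classical Markov brothers' inequality on the interval $[-1,1]$ and then to prove that inequality via comparison with Chebyshev polynomials. First I would set $m = \tfrac12(\max_{[a,b]} p + \min_{[a,b]} p)$ and $\hat p(x) = p(x) - m$, so that the hypothesis becomes $\max_{x \in [a,b]} |\hat p(x)| = H/2$. Rescaling the interval, define $g(t) = \hat p\bigl(\tfrac{a+b}{2} + \tfrac{b-a}{2} t\bigr)$ for $t \in [-1,1]$; this is a real polynomial of degree at most $d$ with $\|g\|_{[-1,1]} = H/2$, where $\|\cdot\|_{[-1,1]}$ denotes the sup norm on $[-1,1]$, and the chain rule gives $g'(t) = \tfrac{b-a}{2}\, p'\bigl(\tfrac{a+b}{2} + \tfrac{b-a}{2} t\bigr)$. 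Thus the desired conclusion $|p'(x)| \le \tfrac{H}{b-a} d^2$ for all $x \in [a,b]$ is exactly equivalent to $\|g'\|_{[-1,1]} \le d^2 \|g\|_{[-1,1]}$.

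It then remains to establish $\|g'\|_{[-1,1]} \le d^2 \|g\|_{[-1,1]}$ for an arbitrary real polynomial $g$ of degree at most $d$; normalizing, assume $\|g\|_{[-1,1]} \le 1$, so that the goal is $|g'(x)| \le d^2$ for all $x \in [-1,1]$. The bound is sharp, attained by the degree-$d$ Chebyshev polynomial $T_d$, which satisfies $\|T_d\|_{[-1,1]} = 1$ and $T_d'(\pm 1) = d^2$. One clean way to finish is to combine two classical facts: Bernstein's inequality $\sqrt{1-x^2}\,|g'(x)| \le d\,\|g\|_{[-1,1]}$ (proved by regarding $g(\cos\theta)$ as a degree-$d$ trigonometric polynomial), and Schur's inequality $\|q\|_{[-1,1]} \le (n+1)\,\|\sqrt{1-x^2}\,q\|_{[-1,1]}$ for polynomials $q$ of degree at most $n$; applying the latter with $q = g'$ and $n = d-1$ and then the former yields $\|g'\|_{[-1,1]} \le d \cdot d\,\|g\|_{[-1,1]} = d^2$. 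An alternative, more self-contained treatment handles the interior of $[-1,1]$ directly with Bernstein's inequality (which already gives $|g'(\cos\theta)| \le d^2$ whenever $|\sin\theta| \ge 1/d$) and treats the endpoints by a sign-counting argument: if $g'(1) = M > d^2$, then $h = \tfrac{M}{d^2} T_d - g$ alternates in sign at the $d+1$ extrema of $T_d$ and also satisfies $h'(1) = 0$, which forces the polynomial $h'$ of degree at most $d-1$ to have $d$ distinct roots, a contradiction.

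I expect the main obstacle to be the behavior near $x = \pm 1$: Bernstein's inequality by itself degenerates there because the factor $\sqrt{1-x^2}$ vanishes, so controlling $g'$ in the narrow endpoint windows requires the finer structure of the Chebyshev polynomials — whether packaged as Schur's inequality or exploited directly via the fact that $T_d$ is the extremal polynomial and $|T_d'|$ is maximized on $[-1,1]$ at the endpoints. Since this is exactly the classical Markov brothers' inequality, an entirely acceptable alternative is to cite a standard approximation-theory reference (e.g. Rivlin's monograph on Chebyshev polynomials) rather than reproduce the argument in full.
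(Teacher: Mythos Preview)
Your proposal is correct: the reduction to $[-1,1]$ via the affine change of variables and the subtraction of the midpoint constant is clean, and the Bernstein--Schur route (or the alternative Chebyshev sign-counting argument) is a standard and valid way to establish $\|g'\|_{[-1,1]} \le d^2 \|g\|_{[-1,1]}$.

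However, there is nothing to compare against: the paper does not prove this lemma at all. It is stated in the preliminaries as the classical Markov brothers' inequality and used as a black box; the paper's own contribution begins with Corollary~\ref{cor:markov} and the later degree lower bounds. In effect, the paper does exactly what you suggest in your last sentence --- cite the result rather than reproduce the argument --- so your proposal supplies strictly more detail than the paper itself provides.
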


This lemma has a useful consequence:
\begin{corollary}\label{cor:markov}
Let $p$ be a real polynomial of degree $d$, and suppose that $|p(x)| \le 1$ for all integers $x \in \{0,1,\ldots,k\}$. If $\max_{x \in [0,k]}|p(x)| \ge 1.001$, then $d =  \Omega(\sqrt{k})$.
\end{corollary}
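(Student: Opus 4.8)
The plan is to derive the bound directly from the Markov brothers' inequality (Lemma~\ref{lem:markov}), applied to $p$ on the whole interval $[0,k]$, together with the trivial fact that every point of $[0,k]$ lies within distance $1/2$ of some integer in $\{0,1,\ldots,k\}$. Write $M = \max_{x\in[0,k]}|p(x)|$, and let $x^\ast\in[0,k]$ achieve this maximum. Let $x_0\in\{0,1,\ldots,k\}$ be the integer nearest to $x^\ast$, so that $|x^\ast-x_0|\le 1/2$. By hypothesis $|p(x_0)|\le 1$, so the reverse triangle inequality gives $|p(x^\ast)-p(x_0)| \ge |p(x^\ast)|-|p(x_0)| \ge M-1$.

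The key step is to bound the same difference from above in terms of $M$. Since $|p(x)-p(y)|\le 2M$ for all $x,y\in[0,k]$, Lemma~\ref{lem:markov} with $[a,b]=[0,k]$ and $H=2M$ gives $|p'(t)|\le (2M/k)\,d^2$ for every $t\in[0,k]$. Integrating $p'$ over the segment between $x_0$ and $x^\ast$ (of length at most $1/2$) then yields
$$|p(x^\ast)-p(x_0)| \;\le\; \frac12\cdot\frac{2M}{k}\,d^2 \;=\; \frac{Md^2}{k}.$$
Comparing the two estimates gives $M-1\le Md^2/k$, i.e.\ $M(1-d^2/k)\le 1$.

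To finish: if $d^2\ge k$ then $d=\Omega(\sqrt k)$ trivially, so assume $d^2<k$; then $M\le (1-d^2/k)^{-1}$. The hypothesis $M\ge 1.001$ forces $1-d^2/k\le 1/1.001$, hence $d^2\ge(1-1/1.001)\,k=\Omega(k)$, which is precisely $d=\Omega(\sqrt k)$. There is nothing special about the constant $1.001$ here: replacing it by any fixed $c>1$ gives the same conclusion with an implied constant depending on $c$.

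The argument is short, and I do not expect a real obstacle. The one point worth getting right is that Markov's inequality here bounds $|p'|$ in terms of the \emph{very} quantity $M$ we are trying to control; the resulting inequality $M(1-d^2/k)\le 1$ should be read as self-improving — once $d$ is at most a small constant times $\sqrt k$, it pins $M$ arbitrarily close to $1$ — rather than as an a priori bound. Beyond that, one just needs to check that the nearest integer $x_0$ really lies in $\{0,\ldots,k\}$, which is immediate since $x^\ast\in[0,k]$.
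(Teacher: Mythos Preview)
Your proof is correct and takes essentially the same approach as the paper: both compare the value of $p$ at the maximum point $x^\ast$ to its value at a nearby integer, then apply Markov's inequality on $[0,k]$ with $H=2M$ to bound the derivative and hence the difference. The only cosmetic differences are that the paper first rescales $p$ so that $M=1.001$ exactly and invokes the mean value theorem, whereas you keep $M$ general, integrate the derivative bound, and read off $M(1-d^2/k)\le 1$; either way the same inequality $d^2\ge (1-1/1.001)k$ falls out.
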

\begin{proof}
Without loss of generality, we may scale $p$ by some constant and choose $x$ so that $|p(x)| = 1.001$ is the maximum absolute value of $p(x)$ on $[0, k]$. By the mean value theorem, there exists some $x^* \in [\lfloor x \rfloor, \lceil x \rceil]$ such that $|p'(x^*)| \ge 0.001$. Applying the previous lemma, we find that:
\begin{align*}
0.001 &\le \frac{2 \cdot 1.001}{k}d^2\\
\sqrt{\frac{0.001}{2.002}k} &\le d.
\end{align*}
\end{proof}

Put another way, if a polynomial is bounded at all integers $\{0,1,\ldots,k\}$ and has degree $o(\sqrt{k})$, then the polynomial satisfies a marginally weaker bound on all of $[0,k]$. We might wonder whether we can still assume some nontrivial bound when $d$ is not so much smaller than $k$. Indeed we can:

\begin{lemma}[Coppersmith and Rivlin \cite{coppersmith-rivlin}]\label{lem:weak_markov}
Let $p$ be a real polynomial of degree $d \le k$, and suppose that $|p(x)| \le 1$ for all integers $x \in \{0,1,\ldots,k\}$. Then there exist constants $a, b$ that do not depend on $d$ or $k$ such that for all $x \in [0, k]$, we have:
$$|p(x)| \le a \cdot \exp\left(bd^2/k\right).$$
\end{lemma}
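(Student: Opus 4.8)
The plan is to split into two regimes according to how $d$ compares with $\sqrt{k}$. When $d^2 \le k/2$, Lemma~\ref{lem:markov} already suffices. Write $M := \max_{x \in [0,k]} |p(x)|$. For any $x \in [0,k]$, pick a nearest integer $n \in \{0,1,\ldots,k\}$, so that $|x-n| \le \tfrac12$ and $|p(n)| \le 1$; then $|p(x)| \le |p(n)| + |x-n|\max_{[0,k]}|p'| \le 1 + \tfrac12\max_{[0,k]}|p'|$. Applying Lemma~\ref{lem:markov} on $[0,k]$ with $H \le 2M$ gives $\max_{[0,k]}|p'| \le (2M/k)d^2$, hence $|p(x)| \le 1 + (d^2/k)M$; taking the maximum over $x$ yields $M \le 1/(1-d^2/k) \le e^{2d^2/k}$ whenever $d^2 \le k/2$. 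So the content of the lemma lies entirely in the range $k/2 < d^2 \le k^2$, where this one–shot argument collapses: the factor $1/(1-d^2/k)$ becomes meaningless even though the claimed bound $a\exp(bd^2/k)$ is still finite (and is merely a constant when $d^2 = O(k)$).

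For that range, I would exploit the fact that a degree-$d$ polynomial can only resolve detail at the Chebyshev scale, which near the ends of $[0,k]$ is $\Theta(k/d^2)$ — finer than the unit integer grid only within $O(d^2/k)$ of each endpoint. Concretely, substitute $x = \tfrac{k}{2}(1-\cos\theta)$, turning $p$ into a degree-$d$ trigonometric polynomial $P(\theta)$ on $[0,\pi]$ with $|P(\theta_j)| \le 1$ at the images $\theta_j$ of $0,1,\ldots,k$. One checks that consecutive $\theta_j$ differ by $O(1/\sqrt{k})$, and that this gap exceeds the critical scale $c/d$ — below which Bernstein's inequality forces $P$ to stay within a constant factor of its grid values, exactly as in the Markov step above — only within $O(d^2/k)$ grid steps of either endpoint, i.e.\ on two arcs of $\theta$-length $O(d/k)$. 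Off those two arcs the grid is a norming set, so if the maximum of $|P|$ is attained there we are done with a constant bound; and if it is attained inside one of the short arcs, a Remez-type growth estimate for trigonometric polynomials bounds it by $\exp(O(d\cdot d/k)) = \exp(O(d^2/k))$ times its value nearby. Equivalently, and more concretely, one can choose a set $T$ of exactly $d+1$ integer nodes following the arcsine (equilibrium) density in the bulk but retaining \emph{every} integer within $O(d^2/k)$ of each endpoint, write $p(x) = \sum_{t\in T} p(t)\,\ell_t^T(x)$ by Lagrange interpolation, and bound $\sum_{t}|\ell_t^T(x)|$: the bulk contributes the usual (essentially logarithmic) Lebesgue constant, while the $O(d^2/k)$ fully-resolved nodes near each endpoint inflate it by at most $2^{O(d^2/k)}$.

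The main obstacle is precisely this endpoint analysis: because $p$ is controlled only at the discrete integers and not on any subinterval, one cannot invoke the Remez inequality as a black box, and it takes real care to show that the $O(d^2/k)$ integers near an endpoint already pin $|p|$ down there up to a factor $\exp(O(d^2/k))$ — and to do so with constants $a,b$ genuinely independent of $d$ and $k$ throughout the range, including the transitional case $d = \Theta(\sqrt{k})$ where the bound must stay a fixed constant. This is exactly the delicate estimate of Coppersmith and Rivlin~\cite{coppersmith-rivlin}, which is why we quote their theorem rather than reprove it; a reader content to lose constants can instead assemble the bound from the classical behavior of the Lebesgue function for equispaced interpolation together with a Remez- or Chebyshev-type growth estimate across the two short endpoint arcs.
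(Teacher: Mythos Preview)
The paper does not prove this lemma at all: it is stated as a cited result of Coppersmith and Rivlin, with no proof or proof sketch. Your proposal is consistent with that --- you yourself conclude by saying ``this is why we quote their theorem rather than reprove it'' --- so there is nothing to compare against on the paper's side.

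That said, what you add is genuinely more than the paper offers. Your easy-regime argument ($d^2 \le k/2$) via the Markov bootstrap $M \le 1 + (d^2/k)M$ is correct and self-contained, and already explains why the lemma is only interesting once $d \gtrsim \sqrt{k}$. Your sketch for the hard regime --- the Chebyshev substitution $x = \tfrac{k}{2}(1-\cos\theta)$, the observation that the integer images $\theta_j$ are sub-$c/d$-spaced except on two endpoint arcs of $\theta$-length $O(d/k)$ containing $O(d^2/k)$ nodes, and the appeal to a Remez-type or Lebesgue-constant estimate across those arcs --- is the right heuristic picture and matches the structure of the Coppersmith--Rivlin argument. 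You also correctly flag the real obstruction: off-the-shelf Remez needs control on a subinterval, not just on finitely many integers, so the endpoint estimate is where the actual work lies. As a proof sketch this is honest and informative; as a proof it is, as you acknowledge, incomplete, and the paper's choice to simply cite \cite{coppersmith-rivlin} is the same choice you ultimately make.
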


We will also use a bound as stated by Paturi \cite{paturi} that bounds a polynomial in terms of its degree and a bound on a nearby interval:
\begin{lemma}\label{lem:chebyshev}
Let $p$ be a real polynomial of degree $d$, and suppose that $|p(x)| \le 1$ for all $|x| \le 1$. Then for all $x$ with $|x| \le 1 + \mu$, we have:
$$|p(x)| \le \exp\left(2d\sqrt{2\mu + \mu^2} \right).$$
\end{lemma}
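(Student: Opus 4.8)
The plan is to reduce the claim to the extremal behavior of Chebyshev polynomials. Let $T_d$ denote the degree-$d$ Chebyshev polynomial of the first kind, normalized so that $|T_d(x)| \le 1$ for $|x| \le 1$. First I would invoke the classical Chebyshev extremal inequality: among all real polynomials of degree at most $d$ that are bounded by $1$ in absolute value on $[-1,1]$, the polynomial $T_d$ grows fastest outside the interval, i.e.\ $|p(x)| \le |T_d(x)|$ for every $x$ with $|x| \ge 1$. It then suffices to bound $|T_d(1+\mu)|$: the range $x \in [-1-\mu, -1]$ is handled identically after replacing $p(x)$ by $p(-x)$ (equivalently, using $|T_d(-x)| = |T_d(x)|$), and for $|x| \le 1$ the claimed bound is trivial since its right-hand side is at least $1$.

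Next I would use the closed form $T_d(x) = \tfrac{1}{2}\bigl[(x + \sqrt{x^2-1})^d + (x - \sqrt{x^2-1})^d\bigr]$, valid for $x \ge 1$, which gives $|T_d(x)| \le (x + \sqrt{x^2-1})^d$ because both summands are nonnegative and the second is at most the first. Since $x \mapsto x + \sqrt{x^2-1}$ is increasing on $[1,\infty)$, evaluating at $x = 1 + \mu$ and using $(1+\mu)^2 - 1 = 2\mu + \mu^2$ yields $|T_d(1+\mu)| \le \bigl(1 + \mu + \sqrt{2\mu + \mu^2}\bigr)^d$.

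It then remains to verify the elementary inequality $1 + \mu + \sqrt{2\mu + \mu^2} \le \exp\bigl(2\sqrt{2\mu + \mu^2}\bigr)$, which, raised to the $d$-th power, completes the proof. Writing $s := \sqrt{2\mu + \mu^2}$, observe that $\mu \le s$ since $\mu^2 \le 2\mu + \mu^2$ for $\mu \ge 0$; hence $1 + \mu + s \le 1 + 2s \le e^{2s}$ by the standard estimate $1 + t \le e^t$.

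The only genuinely nontrivial ingredient is the Chebyshev extremal inequality invoked in the first step, so I expect that (and getting its sign-counting argument exactly right) to be the main obstacle. For a self-contained treatment I would argue by contradiction: assuming $|p(y_0)| > |T_d(y_0)|$ for some $|y_0| > 1$, the rescaled polynomial $r := \tfrac{T_d(y_0)}{p(y_0)}\,p$ agrees with $T_d$ at $y_0$ but has sup-norm strictly below $1$ on $[-1,1]$; since $T_d$ equioscillates between $+1$ and $-1$ at the $d+1$ points $\cos(j\pi/d)$ for $j = 0, \ldots, d$, the difference $T_d - r$ has alternating signs there and hence at least $d$ roots in $[-1,1]$, plus a root at $y_0$. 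A polynomial of degree $\le d$ with $d+1$ roots must vanish identically, forcing $r = T_d$ and contradicting $\|r\|_{[-1,1]} < 1$. Everything after this step is routine manipulation of the explicit formula for $T_d$.
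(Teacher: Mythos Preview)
Your argument is correct. The paper does not actually prove this lemma; it is stated without proof and attributed to Paturi, so there is no ``paper's own proof'' to compare against. What you have written is the standard route to this inequality: invoke the Chebyshev extremal property $|p(x)| \le |T_d(x)|$ for $|x| \ge 1$, use the explicit formula $T_d(x) = \tfrac{1}{2}\bigl[(x+\sqrt{x^2-1})^d + (x-\sqrt{x^2-1})^d\bigr]$ to bound $T_d(1+\mu)$, and finish with the elementary estimate $1+\mu+\sqrt{2\mu+\mu^2} \le e^{2\sqrt{2\mu+\mu^2}}$. Your self-contained sign-change proof of the extremal property is also fine. One small remark: you might state explicitly that $\mu \ge 0$ is being assumed (so that $2\mu+\mu^2 \ge 0$), though this is implicit in the lemma's phrasing.
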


Setting $\mu = \frac{1}{k}$ and performing some computation gives rise to the following:
\begin{corollary}\label{cor:chebyshev}
Let $p$ be a real polynomial of degree $d$, and suppose that $|p(x)| \le 1$ for all $|x| \le 1$. If $|p(1 + 1/k)| \ge 1.001$, then $d = \Omega(\sqrt{k})$.
\end{corollary}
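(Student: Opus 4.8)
The plan is to invoke Lemma~\ref{lem:chebyshev} with the specific choice $\mu = 1/k$. Under the hypothesis $|p(x)| \le 1$ for all $|x| \le 1$, that lemma immediately gives $|p(x)| \le \exp\!\big(2d\sqrt{2\mu + \mu^2}\big) = \exp\!\big(2d\sqrt{2/k + 1/k^2}\big)$ for every $x$ with $|x| \le 1 + 1/k$, and in particular at the point $x = 1 + 1/k$ named in the statement.

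Next I would combine this with the assumed lower bound $|p(1+1/k)| \ge 1.001$. Taking logarithms yields $\ln(1.001) \le 2d\sqrt{2/k + 1/k^2}$. To turn this into a clean lower bound on $d$, I would upper bound the square root: for $k \ge 1$ we have $1/k^2 \le 1/k$, so $\sqrt{2/k + 1/k^2} \le \sqrt{3/k} = \sqrt{3}/\sqrt{k}$. Rearranging gives $d \ge \frac{\ln(1.001)}{2\sqrt{3}}\,\sqrt{k}$, which is $\Omega(\sqrt{k})$ with an explicit constant.

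I do not anticipate any genuine obstacle here: all of the mathematical content is carried by Paturi's inequality (Lemma~\ref{lem:chebyshev}), and what remains is a one-line substitution followed by an elementary rearrangement. The only points requiring a modicum of care are (i) choosing the direction of the estimate on $\sqrt{2/k + 1/k^2}$ correctly — we need an \emph{upper} bound so that the resulting inequality on $d$ points the right way — and (ii) observing that for the finitely many small values of $k$ the conclusion holds trivially after adjusting the hidden constant, so there is no loss in treating $k$ as large.
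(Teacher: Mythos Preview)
Your proposal is correct and matches the paper's approach exactly: the paper simply states that the corollary follows by setting $\mu = 1/k$ in Lemma~\ref{lem:chebyshev} and performing some computation, which is precisely what you carry out.
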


Finally, we state a useful fact about Laurent polynomials:

\begin{lemma}[Symmetric Laurent polynomials]\label{lem:symmetric}
Let $\ell(x)$ be a real Laurent polynomial of degree $d$ that satisfies $\ell(x) = \ell(1/x)$. Then there exists a real polynomial $q$ of degree $d$ such that $\ell(x) = q(x + 1/x)$.
\end{lemma}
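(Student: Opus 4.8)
The plan is to set $u = x + 1/x$ and show that $\ell$, being symmetric under $x \mapsto 1/x$, can be re-expressed as a genuine polynomial in $u$ of the same degree. First I would write $\ell(x) = \sum_{j=-d}^{d} c_j x^j$ with $c_d$ or $c_{-d}$ nonzero; the hypothesis $\ell(x) = \ell(1/x)$ forces $c_j = c_{-j}$ for all $j$, so $\ell(x) = c_0 + \sum_{j=1}^{d} c_j (x^j + x^{-j})$. The key observation is that each symmetric pair $x^j + x^{-j}$ is itself a degree-$j$ polynomial in $u = x + 1/x$: this is exactly the Chebyshev-type recurrence, since $x^{j+1} + x^{-(j+1)} = (x + 1/x)(x^j + x^{-j}) - (x^{j-1} + x^{-(j-1)})$, with base cases $x^0 + x^0 = 2$ and $x^1 + x^{-1} = u$. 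By induction, $x^j + x^{-j} = P_j(u)$ for a monic (up to a factor of $2$) polynomial $P_j$ of degree exactly $j$.

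Given this, I would define $q(u) = c_0 + \sum_{j=1}^{d} c_j P_j(u)$, which is an ordinary real polynomial satisfying $\ell(x) = q(x + 1/x)$ by construction. It remains to check that $\deg q = d$. Since each $P_j$ has degree exactly $j$ and the leading terms do not cancel (the coefficient of $u^d$ in $q$ is a nonzero multiple of the leading coefficient $c_d = c_{-d}$ of $\ell$, which is nonzero by the definition of $\deg \ell = d$), we get $\deg q = d$ as claimed.

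I do not expect a serious obstacle here: the only mild subtlety is being careful about the convention for $\deg \ell$ — under the convention stated in the paper's footnote, $\deg \ell = \max\{\deg \ell_+, \deg \ell_-\}$, and symmetry makes these two equal, so the top-degree coefficient $c_d$ is genuinely nonzero and the leading-term non-cancellation argument goes through. Everything else is the elementary Chebyshev recurrence and a finite linear combination, so the proof is short.
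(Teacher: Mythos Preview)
Your proposal is correct and follows essentially the same approach as the paper: both argue that the symmetry $\ell(x)=\ell(1/x)$ forces $c_j=c_{-j}$, reduce to expressing each $x^j+x^{-j}$ as a polynomial in $x+1/x$, and establish this by induction. The only minor difference is that you use the Chebyshev-type recurrence $x^{j+1}+x^{-(j+1)}=(x+1/x)(x^j+x^{-j})-(x^{j-1}+x^{-(j-1)})$ for the induction step, whereas the paper uses the binomial expansion $(x+1/x)^i=x^i+x^{-i}+r(x)$ with $r$ symmetric of lower degree; your version also makes the verification that $\deg q = d$ slightly more explicit.
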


\begin{proof}
$\ell(x) = \ell(1/x)$ implies that the coefficients of the $x^i$ and $x^{-i}$ terms are equal for all $i$, as otherwise $\ell(x) - \ell(1/x)$ would not equal the zero polynomial. Thus, we may write $\ell(x) = \sum_{i=0}^d a_i \cdot (x^i + x^{-i})$ for some coefficients $a_i$. So, it suffices to show that $x^i + x^{-i}$ can be expressed as a polynomial in $x + 1/x$ for all $0 \le i \le d$.

We prove by induction on $i$. The case $i = 0$ corresponds to constant polynomials. For $i > 0$, by the binomial theorem, observe that $(x + 1/x)^i = x^i + x^{-i} + r(x)$ where $r$ is a degree $i - 1$ real Laurent polynomial satisfying $r(x) = r(1/x)$. By the induction assumption, $r$ can be expressed as a polynomial in $x + 1/x$, so we have $x^i + x^{-i} = (x + 1/x)^i - r(x)$ is expressed as a polynomial in $x + 1/x$.
\end{proof}

\section{Main Result}
\subsection{Lower Bound for \texorpdfstring{$\mathsf{SBQP}$}{SBQP}}
Our results hinge on the following theorem, which uses Laurent polynomials to prove a degree lower bound for bivariate polynomials that satisfy a particular set of bounds at points in the plane:

\begin{theorem}\label{thm:L}
Let $w$ and $N$ be integers with $0 < w < 2w \le N$. Let $R_x = [2w,N] \times [0,w]$ and $R_y = [0,w] \times [2w,N]$ be disjoint rectangles in the plane, and let $L = R_x \cup R_y$. Let $p(x, y)$ be a real polynomial of degree $d$ with the following properties:
\begin{enumerate}
\item $p(2w, 2w) \ge 2$.
\item $0 \le p(x, y) \le 1$ for all $(x, y) \in L \cap \mathbb{Z}^2$.
\end{enumerate}
Then $d = \Omega\left(\min\left\{\sqrt{w},\sqrt{N/w}\right\}\right)$.
\end{theorem}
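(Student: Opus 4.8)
The plan is to reduce the bivariate problem to a univariate one by walking along a curve that passes through the "hard" point $(2w,2w)$ and through both rectangles $R_x$ and $R_y$, exploiting the $x \leftrightarrow y$ symmetry of $L$. Concretely, I would restrict $p$ to the hyperbola $xy = 4w^2$, parametrized by $x = 2wt$, $y = 2w/t$; then $q_0(t) := p(2wt, 2w/t)$ is a Laurent polynomial in $t$ of degree at most $d$ (each monomial $x^a y^b$ becomes a constant times $t^{a-b}$, and $|a-b| \le a + b \le d$). The point $(2w,2w)$ corresponds to $t = 1$, where $q_0(1) \ge 2$. The branch of the hyperbola with $x \in [2w, N]$ has $y = 4w^2/x \le 2w$ and $y \ge 4w^2/N$; provided $N \ge 4w^2$ (the regime where the second term in the min dominates) this branch lies inside $R_x$ for $t \in [1, N/2w]$, and symmetrically the reciprocal branch lies in $R_y$ for $t \in [2w/N, 1]$. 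So $q_0$ is bounded in $[0,1]$ on a continuum of $t$-values — but I need a bound at integer-spaced points to apply Corollary~\ref{cor:markov} or Lemma~\ref{lem:weak_markov}, and the hyperbola only hits $L \cap \mathbb{Z}^2$ at sparse points. The fix is to symmetrize: since $L$ is symmetric under swapping coordinates, $\ell(t) := \tfrac12\big(q_0(t) + q_0(1/t)\big)$ is a symmetric Laurent polynomial of degree $\le d$ with $\ell(1) \ge 2$, and by Lemma~\ref{lem:symmetric} we may write $\ell(t) = q(t + 1/t)$ for an ordinary polynomial $q$ of degree $\le d$.

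Now the variable $u = t + 1/t$ ranges over $[2, 2w/N + N/2w]$ as $t$ ranges over the relevant interval, with $u = 2$ at $t = 1$. I would rescale $u$ so that the image of the "bounded" set becomes an interval of length proportional to $\min\{\sqrt w, \sqrt{N/w}\}^2$ worth of integer points, or more precisely argue directly: $q$ is bounded by $1$ at the images of the integer points of $L$ lying on (or near) the curve, $q(2) \ge 2$, and $2$ is just outside the convex hull of those images. The cleanest route is probably to not insist the curve pass through integer points exactly, but instead use a slightly different curve — e.g. piecewise, or the curve $y = \lfloor 4w^2/x\rfloor$ won't be polynomial, so instead I would choose the hyperbola $xy = c$ for a suitable constant and observe that for \emph{every} integer $x \in [2w, \min\{N, \text{cutoff}\}]$ the corresponding $y = c/x$ satisfies $y \in [0, w]$, hence $(x, \lceil y\rceil)$-type rounding isn't needed because the bound in hypothesis~(2) is only stated at integer points — so I actually do need integer $y$. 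This suggests instead parametrizing by a \emph{vertical} line: fix $y_0 \in \{0, 1, \dots, w\}$ an integer and let $p(x, y_0)$ be a univariate polynomial in $x$ bounded on $\{2w, \dots, N\} \cap \mathbb Z$; but then I lose the point $(2w, 2w)$.

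The resolution — and the key step — is to combine a vertical/horizontal segment with the symmetry. Here is the approach I would actually carry out: apply Lemma~\ref{lem:partial_sym}-style reasoning is not needed; rather, consider the polynomial $s(x,y) = p(x,y)$ along the L-shaped boundary path consisting of the segment from $(2w, 2w)$ down to $(2w, 0)$ inside $\overline{R_x}$ and the segment from $(2w,2w)$ left to $(0, 2w)$ inside $\overline{R_y}$. On the first segment, $r(y) := p(2w, y)$ has degree $\le d$ and $|r(y)| \le 1$ for integer $y \in \{0, \dots, w\}$ while $r(2w) \ge 2$; if $d = o(\sqrt w)$, Corollary~\ref{cor:markov} (applied after noting $2w$ is distance $w$ past the last good integer point, so rescale: $|r| \le 1$ on $\{0,\dots,w\}$ and $|r(2w)| \ge 2 \ge 1.001$, but $2w \notin [0,w]$) — this needs Corollary~\ref{cor:chebyshev} instead, after the affine change of variables sending $[0, w] \to [-1, 1]$, so that $2w \mapsto 3$, which is $1 + \mu$ with $\mu = 2$, a constant, giving only $d = \Omega(1)$. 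That is too weak, which is exactly why the hyperbola (Laurent) trick is essential: walking along $xy = 4w^2$ keeps us inside $L$ all the way from $(2w,2w)$ out to the far corners, so the "overshoot" at $t=1$ is by a \emph{tiny} relative amount $1/k$ with $k \approx \min\{w, N/w\}$, and then Corollary~\ref{cor:chebyshev} yields $d = \Omega(\sqrt k) = \Omega(\min\{\sqrt w, \sqrt{N/w}\})$. So the real plan is: use the hyperbola, symmetrize to get $q(u)$, handle the integrality issue by noting we only need $q$ bounded at \emph{enough} points — and the main obstacle, which I expect to require care, is precisely reconciling the continuous hyperbola with the integer-lattice hypothesis~(2), most likely by showing the symmetrized Laurent polynomial $\ell(t)$ inherits a bound at a net of points dense enough (spacing $O(1)$ in the $u = t+1/t$ coordinate after rescaling by $w$) from the integer points of $L$, then invoking Lemma~\ref{lem:weak_markov}/Corollary~\ref{cor:chebyshev} to finish. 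I would also need the two-sided case $N < 4w^2$ handled symmetrically, where the roles of $w$ and $N/w$ and hence the two terms of the min are interchanged.
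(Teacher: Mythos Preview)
Your overall architecture is right and matches the paper: restrict $p$ to the hyperbola $(x,y)=(2wt,2w/t)$, obtain a symmetric Laurent polynomial $\ell(t)=\ell(1/t)$ of degree $\le d$, rewrite it via Lemma~\ref{lem:symmetric} as an ordinary polynomial $q(u)$ in $u=t+1/t$, and finish with Corollary~\ref{cor:chebyshev}. You have also correctly located the only real obstacle: hypothesis~(2) bounds $p$ only at lattice points, and the hyperbola essentially misses $L\cap\mathbb Z^2$.

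What you are missing is the specific mechanism the paper uses to dissolve this obstacle, and none of your attempted workarounds will do it. The paper does \emph{not} try to find enough integer points near the hyperbola. Instead it first upgrades the lattice bound to a bound on the \emph{entire} rectangle $R_x$, as follows. For each integer $y\in\{0,\dots,w\}$, the univariate polynomial $p(\cdot,y)$ is bounded by $1$ at the $N-2w+1\ge w$ integers $\{2w,\dots,N\}$; by Corollary~\ref{cor:markov} either $d=\Omega(\sqrt w)$ or $|p(x,y)|<1.001$ for all real $x\in[2w,N]$. Now for each (possibly non-integer) $x\in[2w,N]$, the polynomial $p(x,\cdot)$ is bounded by $1.001$ at the $w+1$ integers $\{0,\dots,w\}$; a second application of Corollary~\ref{cor:markov} gives either $d=\Omega(\sqrt w)$ or $|p(x,y)|<1.001^2$ for all $(x,y)\in R_x$. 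At this point the hyperbola argument goes through with no integrality issue at all: $|\ell(t)|<1.001^2$ for $t\in[2,\,N/2w]$ (note $t\ge 2$, not $t\ge 1$, since $y=2w/t\le w$ forces $t\ge 2$), and $\ell(1)\ge 2$, whence $q$ is bounded on $[5/2,\,N/2w+2w/N]$ with $q(2)\ge 2$, and Corollary~\ref{cor:chebyshev} yields $d=\Omega(\sqrt{N/w})$.

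This two-pass Markov preprocessing is also what produces the $\min$: the $\sqrt w$ branch comes from the possibility that Corollary~\ref{cor:markov} fails at either pass, and the $\sqrt{N/w}$ branch from the hyperbola step. There is no separate ``$N<4w^2$'' case to handle.
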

We remark that $L$ gets this name because it looks like the letter ``L'', albeit with the bottom left corner missing (see Figure \ref{fig:L}, shaded regions). The proof idea is as follows. First, we argue that either $d = \Omega(\sqrt{w})$, or else $p$ satisfies a marginally weaker bound on the rectangle $R_x$ by applying the Markov brothers' inequality (via Corollary \ref{cor:markov}) to horizontal and vertical lines through $R_x$. In the latter case, we show that taking an appropriate curve that passes through $R_x$ and the point $(2w, 2w)$ gives rise to a univariate Laurent polynomial $\ell$ of degree $d$. We use Lemma \ref{lem:symmetric} for symmetric Laurent polynomials to reinterpret this as an ordinary polynomial $q$ of degree $d$. We then show that $q$ is bounded on a large interval and grows quickly outside that interval, which implies (by Corollary \ref{cor:chebyshev}) that $q$ has degree $\Omega(\sqrt{N/w})$.

\begin{proof}[Proof of Theorem \ref{thm:L}]
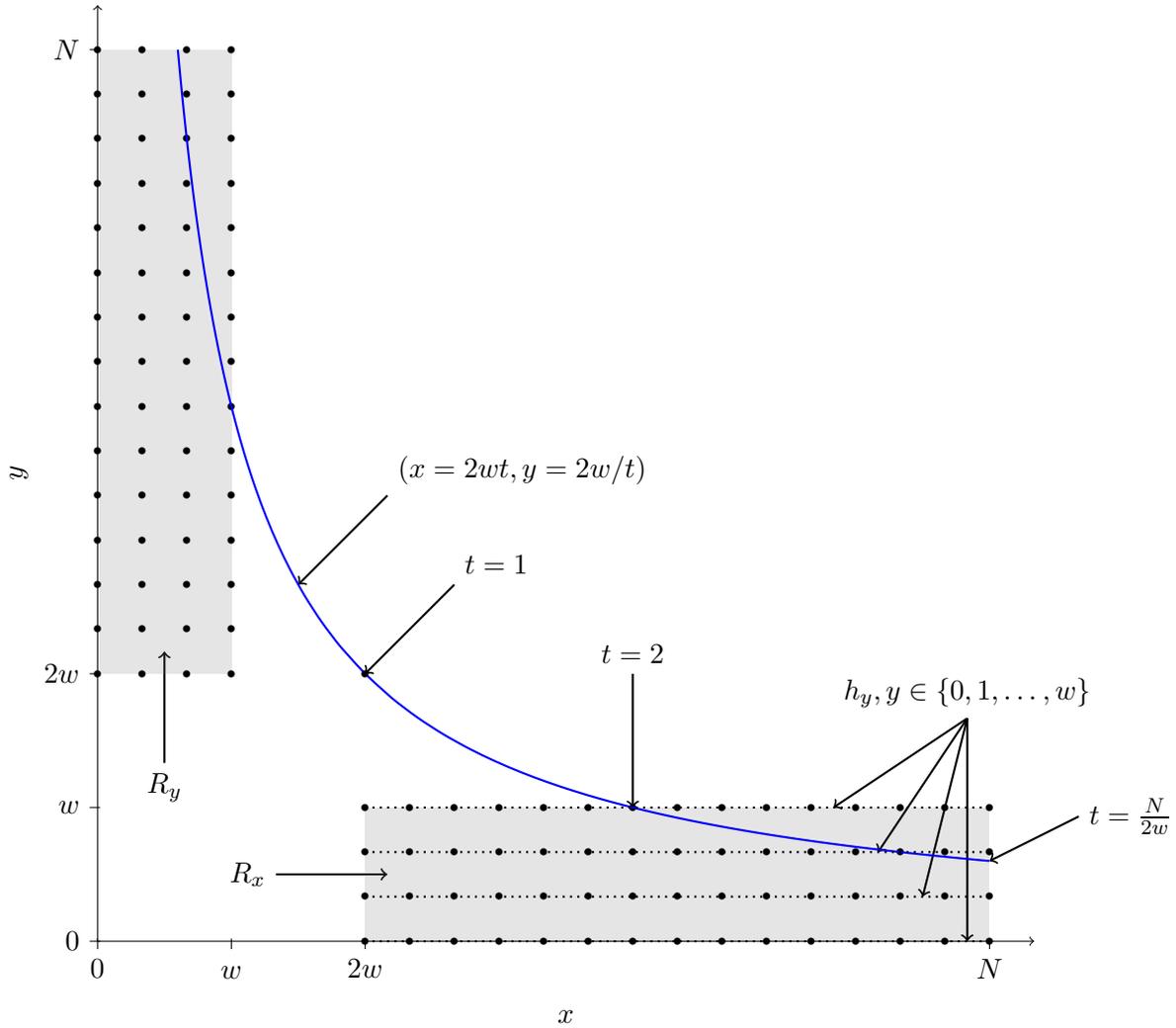
\begin{figure}
\centering
\begin{tikzpicture}[y=0.6cm, x=0.6cm]
	\pgfmathsetmacro{\N}{20};
	\pgfmathsetmacro{\w}{3};
	\pgfmathsetmacro{\tw}{\w+\w};
	\pgfmathsetmacro{\Np}{\N+1};

	\draw[<-,thick] (\tw+0.5, \w/2) -- (\w+1,\w/2) node[anchor=east] {$R_x$};
	\draw[<-,thick] (\w/2, \tw+0.5) -- (\w/2,\w+1) node[anchor=north] {$R_y$};

	\fill[gray,opacity=.2] (\tw,0) -- (\tw,\w) -- (\N,\w) -- (\N,0);
	\fill[gray,opacity=.2] (0,\tw) -- (\w,\tw) -- (\w,\N) -- (0,\N);
	
	\foreach \y in {1,...,\w}
    		\draw[->,thick] (\N-0.5,\w+2) -- (\N-0.5-\y,\y);
    	\draw[<-,thick] (\N-0.5,0) -- (\N-0.5,\w+2) node[anchor=south] {$h_y, y \in \{0,1,\ldots,w\}$};

	\draw[->] (0,0) -- coordinate (x axis mid) (\Np,0);
    	\draw[->] (0,0) -- coordinate (y axis mid) (0,\Np);
    	
	\draw[<-,thick] (\tw,\tw) -- (\tw+2,\tw+2) node[anchor=south west] {$t=1$};   	
	\draw[<-,thick] (\w*4,\w) -- (\w*4,\w*2) node[anchor=south] {$t=2$}; 
	\draw[<-,thick] (\N,4*\w*\w/\N) -- (\N+2,1+4*\w*\w/\N) node[anchor=west] {$t=\frac{N}{2w}$};
	\draw[<-,thick] (1.5*\w,2*\w*4/3) -- (1.5*\w+2,2*\w*4/3+2) node[anchor=south west] {$(x=2wt,y=2w/t)$};
    	
    \draw (0,1pt) -- (0,-3pt) node[anchor=north] {0};
    \draw (\w,1pt) -- (\w,-3pt) node[anchor=north] {$w$\vphantom{l}};
    \draw (\tw,1pt) -- (\tw,-3pt) node[anchor=north] {$2w$};
    \draw (\N,1pt) -- (\N,-3pt) node[anchor=north] {$N$};
    
    \draw (1pt,0) -- (-3pt,0) node[anchor=east] {0};
    \draw (1pt,\w) -- (-3pt,\w) node[anchor=east] {$w$};
    \draw (1pt,\tw) -- (-3pt,\tw) node[anchor=east] {$2w$};
    \draw (1pt,\N) -- (-3pt,\N) node[anchor=east] {$N$};
    
	\node[below=0.8cm] at (x axis mid) {$x$};
	\node[rotate=90, above=0.8cm] at (y axis mid) {$y$};
	
    	\foreach \x in {\tw,...,\N}
    			\foreach \y in {0,...,\w}
     		\node at (\x,\y) {\tiny\textbullet};
    \foreach \x in {0,...,\w}
    			\foreach \y in {\tw,...,\N}
     		\node at (\x,\y) {\tiny\textbullet};
     
     \draw[thick,domain=1:{\N/(2*\w)},smooth,variable=\t,blue] plot ({2*\w*\t},{2*\w/\t});
     \draw[thick,domain=1:{\N/(2*\w)},smooth,variable=\t,blue] plot ({2*\w/\t},{2*\w*\t});
     \node at ({\w+\w},{\w+\w}) {\tiny\textbullet};
     
     \foreach \y in {0,...,\w}
     	\draw[thick,dotted] (\tw,\y) -- (\N,\y);
\end{tikzpicture}
\caption{\label{fig:L}Diagram of Theorem \ref{thm:L}. The lattice points $L \cap \mathbb{Z}^2$ where $0 \le p(x,y) \le 1$ are plotted. Not shown: vertical lines $v_x = x \times [0, w]$ through $R_x$ for each $x \in [2w, N]$ (there are infinitely many such lines).}
\end{figure}

We assume that $3w < N$, as otherwise $\sqrt{N/w} = O(1)$ and the theorem holds trivially.

Let $y \in \{0,1,\ldots,w\}$ be an integer, and consider a horizontal line segment $h_y = [2w, N] \times y$ that passes through $R_x$ (see Figure \ref{fig:L}, dotted horizontal lines). The restriction of $p$ to $h_y$ gives rise to a univariate polynomial $p_y(x)$ of degree $d$. By the assumed bounds on $p(x, y)$ at lattice points in $L$, $|p_y(x)| \le 1$ for all integers $x \in \{2w, 2w + 1,\ldots, N\}$. By the assumption $3w < N$, the interval $[2w, N]$ has length at least $w$. So, we may apply Corollary \ref{cor:markov} to $p_y$ to conclude that either $d = \Omega(\sqrt{w})$, or else $|p(x, y)| < 1.001$ for all $(x, y) \in h_y$.

Now, we use the bounds along the horizontal integer lines through $R_x$ to get bounds along vertical lines. Let $x \in [2w, N]$ (\textit{not necessarily} an integer), and consider a vertical line segment $v_x = x \times [0, w]$ that passes through $R_x$. The restriction of $p$ to $v_x$ gives rise to a univariate polynomial $p_x(y)$ of degree $d$. The intersection of $v_x$ with the $h_y$'s gives a bound $|p_x(y)| < 1.001$ for all integers $y \in \{0,1,\ldots, w\}$. So, we may apply Corollary \ref{cor:markov} to $p_x / 1.001$ to conclude that either $d = \Omega(\sqrt{w})$, or else $|p(x, y)| < 1.001^2$ for all $(x, y) \in v_x$. Because every point $(x, y)$ in the rectangle $R_x$ lies on some $v_x$, we conclude that $|p(x, y)| < 1.001^2$ for all $(x, y) \in R$.

Observe that if $p(x, y)$ satisfies the statement of the theorem, then so does $p(y, x)$. This is because the constraints in the statement of the theorem are symmetric in $x$ and $y$ (in particular, because $R_x$ and $R_y$ are mirror images of one another along the line $x = y$; see Figure \ref{fig:L}). As a result, we may assume without loss of generality that $p$ is symmetric, i.e. $p(x, y) = p(y, x)$. Else, we may replace $p$ by $\frac{p(x, y) + p(y, x)}{2}$ because the set of polynomials that satisfy the inequalities in the statement of the theorem are closed under convex combinations.

Consider the parametric curve $(x = 2wt, y=2w/t)$ as it passes through $R_x$ (see Figure \ref{fig:L}). We can view the restriction of $p(x, y)$ to this curve as a Laurent polynomial $\ell(t) = p(2wt, 2w/t)$ of degree $d$. The bound of $p(x,y)$ on all of $R_x$ implies that $|\ell(t)| < 1.001^2$ when $t \in [2, \frac{N}{2w}]$ and that $\ell(1) \ge 2$ (see Figure \ref{fig:L}). Moreover, the condition that $p(x, y)$ is symmetric implies that $\ell(t) = \ell(1/t)$.

By Lemma \ref{lem:symmetric} for symmetric Laurent polynomials, $\ell(t)$ can be viewed as a degree $d$ polynomial $q(t + 1/t)$. Under the transformation $s = t + 1/t$, $q$ satisfies $|q(s)| < 1.001^2$ for $s \in [2 + 1/2, \frac{N}{2w} + \frac{2w}{N}]$ and $q(2) \ge 2$. Note that the length of the interval $[2 + 1/2, \frac{N}{2w} + \frac{2w}{N}]$ is $\Theta(N/w)$ because $w < N$. By an appropriate affine transformation of $q$, we can conclude from Corollary \ref{cor:chebyshev} with $k = \Theta(N/w)$ that $d = \Omega(\sqrt{N/w})$.
\end{proof}

Theorem \ref{thm:L} implies an $\mathsf{SBQP}$ query complexity lower bound for $\AndApxCount$:

\begin{theorem}\label{thm:sbqp}
Let $Q^{A_0,A_1}$ be an $\mathsf{SBQP}$ algorithm for $\AndApxCount$ that makes $T$ queries to membership oracles $A_0$ and $A_1$. Then $T = \Omega\left(\min\left\{\sqrt{w},\sqrt{N/w}\right\}\right)$.
\end{theorem}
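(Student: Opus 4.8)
The plan is to connect the $\mathsf{SBQP}$ algorithm to a bivariate polynomial via Lemma~\ref{lem:partial_sym}, then show that this polynomial satisfies (up to scaling) the hypotheses of Theorem~\ref{thm:L}, which immediately yields the desired degree — and hence query — lower bound. First I would invoke Lemma~\ref{lem:partial_sym} to obtain a bivariate real polynomial $p(x,y)$ of degree at most $2T$ equal to the expected acceptance probability of $Q^{A_0,A_1}$ over uniformly random oracles $A_0, A_1$ with $|A_0| = x$ and $|A_1| = y$. Since this is an average of acceptance probabilities, automatically $0 \le p(x,y) \le 1$ for all integers $x, y \in [N]$ — in particular on $L \cap \mathbb{Z}^2$, giving property~2 of Theorem~\ref{thm:L} outright. (We may need $p(x,y)$ on $[0,N]$ rather than $[N] = \{1,\ldots,N\}$; the case $x=0$ or $y=0$ corresponds to an empty oracle, which is still a legitimate input to the algorithm, so $p$ is defined and bounded there too.)

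The substantive step is property~1. Recall the $\mathsf{SBQP}$ promise (Definition~\ref{def:sbp_sbqp}): there is a polynomial $\sigma$ such that "yes" instances are accepted with probability $\ge 2^{-\sigma}$ and "no" instances with probability $\le 2^{-\sigma}/2$. An instance with $|A_0| = |A_1| = 2w$ is a "yes" instance of $\AndApxCount$, so the acceptance probability — for \emph{every} such pair of oracles, hence for their average — is at least $2^{-\sigma}$; thus $p(2w, 2w) \ge 2^{-\sigma}$. An instance in which at least one set has size $\le w$ (and the other satisfies the $\ApxCount$ promise) is a "no" instance, so for all integer points $(x,y) \in L \cap \mathbb{Z}^2$ — where one coordinate is in $[0,w]$ and the other in $[2w,N]$ — we have $p(x,y) \le 2^{-\sigma}/2$. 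Now I rescale: set $\tilde p(x,y) = 2^{\sigma+1} \cdot p(x,y)$. Then $\tilde p(2w,2w) \ge 2$ and $0 \le \tilde p(x,y) \le 1$ on $L \cap \mathbb{Z}^2$, and $\deg \tilde p = \deg p \le 2T$. So $\tilde p$ satisfies the hypotheses of Theorem~\ref{thm:L}, whence $2T \ge \deg \tilde p = \Omega(\min\{\sqrt{w}, \sqrt{N/w}\})$, i.e. $T = \Omega(\min\{\sqrt{w}, \sqrt{N/w}\})$, as claimed.

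The main obstacle — really the only point requiring care — is the interface between the $\AndApxCount$ promise and the hypothesis "$0 \le p(x,y) \le 1$ for all $(x,y) \in L \cap \mathbb{Z}^2$" of Theorem~\ref{thm:L}. One must confirm that every lattice point of $L$ corresponds to a legitimate "no" instance on which the $\mathsf{SBQP}$ algorithm's soundness guarantee applies. A point of $R_x \cap \mathbb{Z}^2$ has $x \in \{2w,\ldots,N\}$ and $y \in \{0,\ldots,w\}$: the second set has size $\le w$, and the first set has size $\ge 2w$, so both coordinates respect the $\ApxCount$ promise and this is a genuine "no" instance of $\AndApxCount$ (symmetrically for $R_y$). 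Hence the soundness bound $p(x,y) \le 2^{-\sigma}/2 \le 1/2 \le 1$ holds at all these points, and the lower bound $p(x,y) \ge 0$ is automatic since $p$ is an average of probabilities. I would also note in passing that we are free to assume $N$ and $w$ are such that $0 < w < 2w \le N$, since otherwise the $\AndApxCount$ promise is vacuous or degenerate; this matches the hypothesis of Theorem~\ref{thm:L}. With these checks in place the proof is complete.
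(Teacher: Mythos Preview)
Your proposal is correct and follows essentially the same approach as the paper: apply Lemma~\ref{lem:partial_sym} to obtain the bivariate polynomial $p$ of degree at most $2T$, use the $\mathsf{SBQP}$ acceptance gap to verify the hypotheses of Theorem~\ref{thm:L} after rescaling (the paper scales by $1/\alpha$ where you scale by $2^{\sigma+1}$, which is the same thing in different notation), and read off the degree lower bound. Your additional care in checking that every lattice point of $L$ corresponds to a genuine ``no'' instance satisfying the $\ApxCount$ promise, and that the $x=0$ or $y=0$ cases are covered, is welcome but not a departure from the paper's argument.
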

\begin{proof}
Since $Q$ is an $\mathsf{SBQP}$ algorithm, we may suppose that $Q$ accepts with probability at least $2\alpha$ on a ``yes'' instance and with probability at most $\alpha$ on a ``no'' instance. Using Lemma \ref{lem:partial_sym}, take $p(x,y)$ to be the polynomial of degree at most $2T$ that satisfies:
$$p(x, y) = \E_{\substack{|A_0|=x,\\|A_1|=y}}\left[\Pr[ Q^{A_0,A_1} \text{ accepts}] \right].$$
Define $L' = ([0, w] \times [0, w]) \cup ([0, w] \times [2w, N]) \cup ([2w, N] \times [0, w])$. The conditions on the acceptance probability of $Q^A$ for all $A_0, A_1$ that satisfy the $\ApxCount$ promise imply that $p(x, y)$ satisfies these corresponding conditions:
\begin{enumerate}
\item $1 \ge p(x, y) \ge 2\alpha$ for all $(x, y) \in \left([2w, N] \times [2w, N]\right) \cap \mathbb{Z}^2$.
\item $0 \le p(x, y) \le \alpha$ for all $(x, y) \in L' \cap \mathbb{Z}^2$.
\end{enumerate}
In particular, the polynomial $\frac{1}{\alpha} \cdot p(x, y)$ satisfies the (weaker) conditions of Theorem \ref{thm:L}, from which it follows that $T = \Omega\left(\min\left\{\sqrt{w},\sqrt{N/w}\right\}\right)$.
\end{proof}

We remark that even though we could assume $p(x, y) \ge 2\alpha$ over a large region, Theorem \ref{thm:L} only needed $p(x, y) \ge 2\alpha$ at a single point: $(x, y) = (2w, 2w)$. We view this as expressing the intuition that the acceptance probability of an $\mathsf{SBQP}$ algorithm ``should'' be increasing in $|A_0|$ and $|A_1|$.

\subsection{(Non)-Tightness of \texorpdfstring{$\mathsf{SBQP}$}{SBQP} Lower Bound}
In this section, we compare our $\mathsf{SBQP}$ query complexity lower bound for $\AndApxCount$ to known upper bounds. We find a gap between these bounds, particularly when $N$ is much larger than $w$. This motivates an approach to improving our lower bounds for large $N$. In Theorem \ref{thm:L_N_large}, we prove that this approach indeed gives a better lower bound.

The best upper bound we know of for $\mathsf{SBQP}$ query complexity is $O\left(\min\left\{w,\sqrt{N/w}\right\}\right)$, so our bound is at least tight when $N = O(w^2)$. The $O(\sqrt{N/w})$ upper bound follows from the $\mathsf{BQP}$ algorithm of Brassard, H{\o}yer, and Tapp \cite{brassard}. The $O(w)$ upper bound is in fact an $\mathsf{SBP}$ upper bound with the following algorithmic interpretation: first, guess $w + 1$ items randomly from each of $A_0$ and $A_1$. Then, verify using the membership oracle that the first $w + 1$ items all belong to $A_0$ and that the latter $w+1$ items all belong to $A_1$, accepting if and only if this is the case. This accepts with nonzero probability if and only if $|A_0| \ge w + 1$ and $|A_1| \ge w + 1$.

Can the gap between the lower and upper bounds be improved? On the upper bound side, it is tempting to combine Grover search or Brassard-H{\o}yer-Tapp approximate counting with the classical verification to get an $O(\sqrt{w})$ algorithm, but this fails in general because both algorithms always have some nonzero chance of accepting when the number of marked items is nonzero. This suggests that perhaps the lower bound is not tight, at least when $N \gg w$.

Looking for improvements on the lower bound side, careful observation reveals that the main bottleneck in the proof of Theorem \ref{thm:L} is the bound on the growth of polynomials bounded at equally spaced points (Corollary \ref{cor:markov}), which breaks down completely when the polynomial has degree $\omega(\sqrt{w})$. One might observe that we used Corollary \ref{cor:markov} to bound $p(x,y)$ on all of $R_x$, even though we really just need a bound on $p(x,y)$ at the points $(x=2wt, y=2w/t)$.

In fact, this leads to an approach for improving the lower bound, which we now describe. At a high level, we might hope to bound $p(x,y)$ on $(x=2wt, y=2w/t)$ by observing that the curve approaches the line $y=0$ as $t$ grows large. When $N$ is large enough, we \textit{can} still conclude a bound on $p(x, 0)$ for $(x, 0) \in R$ using Corollary \ref{cor:markov}, and intuitively $p(x,y)$ should be close to $p(x,0)$ as $y \to 0$. This intuition indeed works, and allows us to conclude an $\Omega(w)$ lower bound when $N = 2^{\Omega(w)}$. Our strategy for proving this improved lower bound is to show that if $d = o(w)$, then there exists some $\epsilon > 0$ that depends only on $w$ such that $|p(x, y)| < 1.002$ whenever $y \le \epsilon$ and $(x, y) \in R$. Then, the curve $(x = 2wt, y = 2w/t)$ lies in this region whenever $\frac{2w}{\epsilon} \le t \le \frac{N}{2w}$. It follows that the polynomial $q(s)$ as in the proof of Theorem \ref{thm:L} satisfies $|q(s)| < 1.002$ for all $s \in [\frac{2w}{\epsilon} + \frac{\epsilon}{2w}, \frac{N}{2w} + \frac{2w}{N}]$ and $q(2) \ge 2$. So long as $N$ satisfies $\frac{N}{2w} \ge w^2 \cdot \frac{2w}{\epsilon}$, the length of this interval is $\frac{2w}{\epsilon} \cdot \Omega(w^2)$. This gives a contradiction: an appropriate affine transformation of $q$ satisfies the statement of Corollary \ref{cor:chebyshev} with $k = \Omega(w^2)$ but has degree $d = o(w)$. We conclude that $d = \Omega(w)$.

\begin{theorem}\label{thm:L_N_large}
Let $w$, $N$, $L$, $p(x,y)$, and $d$ satisfy the statement of Theorem \ref{thm:L}. If $N = 2^{\Omega(w)}$, then $d = \Omega(w)$.
\end{theorem}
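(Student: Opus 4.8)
The plan is to follow the road map already sketched in the paragraph preceding the theorem statement, so the real work is to make the claim ``$p(x,y)$ is close to $p(x,0)$ as $y\to 0$'' quantitative. The key new ingredient over Theorem~\ref{thm:L} is to \emph{not} apply Corollary~\ref{cor:markov} along the vertical segments $v_x$ (which fails once $d=\omega(\sqrt w)$), but instead to exploit the stronger hypothesis $d=o(w)$ together with a Markov-type bound on the \emph{derivative} in the $y$ direction.

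\medskip
\noindent\textbf{Step 1 (a bound on $p(x,0)$ along $R_x$).} Exactly as in the first paragraph of the proof of Theorem~\ref{thm:L}, restrict $p$ to the horizontal segment $h_0=[2w,N]\times 0$. This is a univariate polynomial of degree $d$ bounded by $1$ at the integers $\{2w,\ldots,N\}$, an interval of length $\ge w$ (using $3w<N$). Since $d=o(w)=o(\sqrt k)$ with $k\ge w$, Corollary~\ref{cor:markov} gives $|p(x,0)|<1.001$ for all $x\in[2w,N]$.

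\medskip
\noindent\textbf{Step 2 (controlling the $y$-derivative near $y=0$).} Fix $x\in[2w,N]$ and consider $p_x(y)=p(x,y)$, a degree-$d$ polynomial in $y$. By the lattice bounds we have $|p_x(y)|\le 1$ for all integers $y\in\{0,1,\ldots,w\}$, so Lemma~\ref{lem:weak_markov} (Coppersmith--Rivlin) applies with ``$k$''$=w$: there are absolute constants $a,b$ with $|p_x(y)|\le a\exp(bd^2/w)$ for all $y\in[0,w]$. Since $d=o(w)$ we have $d^2/w=o(w)\to$ unbounded in general --- so this crude bound alone is not enough; instead I apply Markov's inequality (Lemma~\ref{lem:markov}) to $p_x$ on $[0,w]$ with $H=2a\exp(bd^2/w)$ to get $|p_x'(y)|\le \frac{2a}{w}\exp(bd^2/w)\,d^2$ on $[0,w]$. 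Combined with Step 1's bound $|p_x(0)|<1.001$ and the fundamental theorem of calculus, for any $y\le\epsilon$ we get $|p(x,y)| \le 1.001 + \epsilon\cdot\frac{2a}{w}\exp(bd^2/w)\,d^2$. The point is that the right-hand side is a fixed quantity $C(d,w)$ once $d$ and $w$ are fixed (it does not depend on $N$), so choosing $\epsilon = \epsilon(w,d) \le \frac{0.001\,w}{2a\,d^2}\exp(-bd^2/w)$ forces $|p(x,y)|<1.002$ for all $(x,y)\in R_x$ with $y\le\epsilon$.

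\medskip
\noindent\textbf{Step 3 (the Laurent-to-ordinary-polynomial argument).} Now proceed verbatim as in the last three paragraphs of the proof of Theorem~\ref{thm:L}: symmetrize $p$ so that $p(x,y)=p(y,x)$, form $\ell(t)=p(2wt,2w/t)$, a Laurent polynomial of degree $d$ with $\ell(1/t)=\ell(t)$, and rewrite it via Lemma~\ref{lem:symmetric} as $q(t+1/t)$ with $q$ of degree $d$. The curve $(2wt,2w/t)$ lies inside the ``$y\le\epsilon$'' slab of $R_x$ precisely when $\frac{2w}{\epsilon}\le t\le\frac{N}{2w}$, so $|q(s)|<1.002$ for $s\in[\frac{2w}{\epsilon}+\frac{\epsilon}{2w},\,\frac{N}{2w}+\frac{2w}{N}]$, while $q(2)\ge 2$ from $\ell(1)=p(2w,2w)\ge 2$. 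Since $N=2^{\Omega(w)}$, a fortiori $\frac{N}{2w}\ge w^2\cdot\frac{2w}{\epsilon}$ once $w$ is large enough (because $\epsilon$ is only exponentially small in $d^2/w$, and $d=o(w)$ makes $d^2/w=o(w)$, so $1/\epsilon = 2^{o(w)}\cdot\mathrm{poly}(w)$ is dominated by $N=2^{\Omega(w)}$), hence the good interval has length $k=\Omega(w^2)$. Applying Corollary~\ref{cor:chebyshev} after an affine rescaling of $q$ to that interval yields $d=\Omega(\sqrt k)=\Omega(w)$, contradicting $d=o(w)$. Therefore $d=\Omega(w)$.

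\medskip
\noindent\textbf{Main obstacle.} The delicate point is the bookkeeping in Step 2: I need the slab width $\epsilon$ to shrink only like $2^{-o(w)}$ (times a polynomial factor), so that ``$N=2^{\Omega(w)}$'' genuinely beats $1/\epsilon$ with room to spare for the extra $w^3$ factor needed by Corollary~\ref{cor:chebyshev}. This hinges on the fact that under the assumption $d=o(w)$ we have $bd^2/w = o(w)$, so $\exp(bd^2/w)=2^{o(w)}$; everything else ($d^2$, $1/w$, constants) is polynomial in $w$. One should be slightly careful that the argument is really a proof by contradiction --- assume $d = o(w)$, derive $d=\Omega(w)$ --- or, cleanly, phrase it as: there is a constant $c>0$ such that if $d\le cw$ then the above forces $d=\Omega(\sqrt{w^2})=\Omega(w)$, so in all cases $d=\Omega(w)$. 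A secondary (routine) check is that all the interval endpoints in Step 3, such as $\frac{2w}{\epsilon}+\frac{\epsilon}{2w}$, are $\Theta(w/\epsilon)$ and that the affine map sending $[\frac{2w}{\epsilon}+\frac{\epsilon}{2w},\frac{N}{2w}+\frac{2w}{N}]$ to $[-1,1]$ sends $s=2$ to a point of the form $1+1/k$ with $k=\Omega(w^2)$, exactly as in Corollary~\ref{cor:chebyshev}.
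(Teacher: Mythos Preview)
Your approach is exactly the paper's, and the contradiction structure is fine. But Step~2 has a gap: you write ``By the lattice bounds we have $|p_x(y)|\le 1$ for all integers $y\in\{0,1,\ldots,w\}$'' for an \emph{arbitrary} $x\in[2w,N]$, yet the hypothesis of Theorem~\ref{thm:L} only bounds $p$ at points of $L\cap\mathbb{Z}^2$, i.e.\ at \emph{integer} $x$. Since the curve $(2wt,2w/t)$ in Step~3 hits non-integer $x$, you need the Coppersmith--Rivlin and Markov bounds on $p_x$ for all real $x\in[2w,N]$, not just integer $x$. The fix (which is what the paper does) is to run your Step~1 not only on $h_0$ but on every horizontal line $h_y$ with integer $y\in\{0,1,\ldots,w\}$; this yields $|p(x,y)|<1.001$ for all real $x\in[2w,N]$ and integer $y$, and then Step~2 goes through verbatim with $1.001$ in place of $1$.

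A smaller point: in Step~1 the inference ``$d=o(w)=o(\sqrt{k})$ with $k\ge w$'' is not valid as stated (e.g.\ $k=w$, $d=w/\log w$). You should instead use that the horizontal segment has length $k=N-2w=2^{\Omega(w)}$, so $d=o(w)$ is trivially $o(\sqrt{k})$; this is in fact where the hypothesis $N=2^{\Omega(w)}$ first enters the argument, and the paper makes this explicit.
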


\begin{proof}
Similar to the proof of Theorem \ref{thm:L}, we first bound $p(x, y)$ on horizontal lines through $R_x$, but we can assume a better lower bound because $N$ is now assumed to be large. Exactly as before, we let $y \in \{0,1,\ldots,w\}$ be an integer, and we consider a horizontal line segment $h_y = [2w, N] \times y$ that passes through $R_x$. The restriction of $p$ to $h_y$ gives rise to a univariate polynomial $p_y(x)$ of degree $d$. By the assumed bounds on $p(x, y)$ at lattice points in $L$, $|p_y(x)| \le 1$ for all integers $x \in \{2w, 2w+1,\ldots, N\}$. But now, we can assume $N \gg w^2$, and so Corollary \ref{cor:markov} implies that either $d = \Omega(w)$, or else $|p(x, y)| < 1.001$ for all $(x, y) \in h_y \cap L$.

This time, instead of using Corollary \ref{cor:markov} to bound $p(x, y)$ on vertical lines through $R_x$, we start with the bound of Coppersmith and Rivlin (Lemma \ref{lem:weak_markov}). As before, we let $x \in [2w, N]$ (\textit{not necessarily} an integer), and we consider a vertical line segment $v_x = x \times [0, w]$ that passes through $R_x$. The restriction of $p$ to $v_x$ gives rise to a univariate polynomial $p_x(y)$ of degree $d$. The intersection of $v_x$ with the $h_y$'s gives a bound $|p_x(y)| < 1.001$ for all integers $y \in \{0,1,\ldots, w\}$.

Suppose for a contradiction that $d = o(w)$. Then Lemma \ref{lem:weak_markov} implies that $|p_x(y)| < 2^{o(w)}$ for all $(x, y) \in v_x$. From the Markov brothers' inequality (Lemma \ref{lem:markov}), we can assume that the derivative satisfies $|p'_x(y)| \le \frac{2^{o(w)}}{w} \cdot o(w^2) \le 2^{o(w)}$ for all $y \in [0, w]$. Because $|p_x(0)| < 1.001$, then by basic calculus, there exists $\epsilon = 2^{-o(w)}$ such that $p_x(y) < 1.002$ for all $y \le \epsilon$. In particular, $|p(x, y)| < 1.002$ whenever $y \le \epsilon$ and $(x, y) \in R$.

Recall that it sufficed to show $\frac{N}{2w} \ge w^2 \cdot \frac{2w}{\epsilon}$, or equivalently $N \ge \frac{4w^4}{\epsilon}$ to get a contradiction from the assumption $d = o(w)$. Because $\epsilon = 2^{-o(w)}$, this follows from the assumption that $N = 2^{\Omega(w)}$. We conclude that $d = \Omega(w)$.
\end{proof}

\subsection{Lower Bound for \texorpdfstring{$\mathsf{QMA}$}{QMA}}
We now prove two results about $\mathsf{QMA}$ complexity that follow from the $\mathsf{SBQP}$ lower bound of Theorem \ref{thm:sbqp}:

\begin{corollary}\label{cor:qma_separation}
There exists an oracle $A$ and a pair of languages $L_0, L_1$ such that:
\begin{enumerate}
\item $L_0, L_1 \in \mathsf{SBP}^A$
\item $L_0 \cap L_1 \not\in \mathsf{SBQP}^A$.
\item $\mathsf{SBP}^A \not\subset \mathsf{QMA}^A$.
\end{enumerate} 
\end{corollary}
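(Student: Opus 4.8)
The plan is to build the oracle $A$ by a standard diagonalization over $\mathsf{SBQP}$ machines, using Theorem~\ref{thm:sbqp} as the combinatorial core. First I would fix a function $w = w(n)$ growing polynomially in $n$ — say $w(n) = n$, with $N(n) = 2^n$ so that $\min\{\sqrt{w},\sqrt{N/w}\} = \sqrt{n}$ grows super-polylogarithmically in $\log N$; this is the regime in which the $\mathsf{SBQP}$ lower bound of $\Omega(\sqrt{n})$ queries beats any $(\log N)^{O(1)} = n^{O(1)}$... wait, that is not super-polylog, so I would instead take $N(n) = 2^{n}$ but phrase the separation against machines whose running time (hence query count) is polynomial in the \emph{input length} $n$ to the languages, while $\ApxCount_{N,w}$ lives on a domain of size $N = 2^n$, so $\mathrm{poly}(n) = \mathrm{poly}(\log N) = o(\sqrt{w})$ when $w = 2^{\Omega(n)}$; cleanest is to set $w(n)$ and $N(n)$ both $2^{\Theta(n)}$ with $w \le N/2$ and $\min\{\sqrt w,\sqrt{N/w}\} = 2^{\Omega(n)}$, which dominates every fixed polynomial in $n$. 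The oracle $A$ encodes, for each length $n$, two subsets $A_0^{(n)}, A_1^{(n)} \subseteq [N(n)]$ each satisfying the $\ApxCount$ promise; define $L_i = \{1^n : |A_i^{(n)}| \ge 2w(n)\}$ for $i \in \{0,1\}$.

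Next I would verify the three conclusions assuming the encoding can be chosen adversarially. For (1): each $L_i \in \mathsf{SBP}^A$ because $\ApxCount$ has an $\mathsf{SBP}$ algorithm — randomly guess an element of $[N]$ and accept iff it lies in $A_i^{(n)}$, which accepts with probability $|A_i^{(n)}|/N$, and $|A_i^{(n)}| \ge 2w$ versus $\le w$ gives exactly the factor-$2$ gap required by Definition~\ref{def:sbp_sbqp} with $\sigma(n) = \log(N/w) = \Theta(n)$. For (2): $L_0 \cap L_1 = \{1^n : |A_0^{(n)}|\ge 2w \text{ and } |A_1^{(n)}| \ge 2w\}$ is exactly an $\AndApxCount$ instance per length, so a hypothetical $\mathsf{SBQP}^A$ algorithm deciding it, running in time $t(n) = \mathrm{poly}(n)$, makes $T(n) = \mathrm{poly}(n) = o\!\left(\min\{\sqrt{w(n)},\sqrt{N(n)/w(n)}\}\right)$ queries, contradicting Theorem~\ref{thm:sbqp} on the length-$n$ slice (the algorithm restricted to querying only coordinates of the fixed instances $A_0^{(n)}, A_1^{(n)}$ is an $\mathsf{SBQP}$ query algorithm for $\AndApxCount$, and a too-fast algorithm cannot exist even for a single hard instance — standard hybrid argument). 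For (3): if $\mathsf{SBP}^A \subseteq \mathsf{QMA}^A$, then $L_0, L_1 \in \mathsf{QMA}^A$, and since $\mathsf{QMA}$ is closed under intersection, $L_0 \cap L_1 \in \mathsf{QMA}^A \subseteq \mathsf{SBQP}^A$, contradicting (2); hence $\mathsf{SBP}^A \not\subset \mathsf{QMA}^A$. Note (3) actually follows formally from (1) and (2) together with $\mathsf{QMA}$'s closure under intersection and $\mathsf{QMA}^A \subseteq \mathsf{SBQP}^A$, so the real work is in (1) and (2).

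Then comes the diagonalization itself. Enumerate all pairs $(Q_j, \text{polynomial time bound})$ of $\mathsf{SBQP}$ oracle machines with their candidate $\mathsf{SBQP}$-promise gap $2\alpha_j$ vs. $\alpha_j$; for the $j$-th machine I would choose a fresh length $n_j$ (large enough that $n_j$'s query budget $T_j(n_j)$ is below the $\Omega(\min\{\sqrt{w},\sqrt{N/w}\})$ threshold, and large enough that no earlier stage touched any coordinate at length $n_j$) and, invoking the lower bound of Theorem~\ref{thm:sbqp}, argue that no setting of $(A_0^{(n_j)}, A_1^{(n_j)})$ respecting the promise makes $Q_j^A$ an $\mathsf{SBQP}$ algorithm for $\AndApxCount$ at that length — so in particular there is a promise-respecting choice of the two sets on which $Q_j$ fails the $\mathsf{SBQP}$ acceptance condition. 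Actually the cleanest phrasing: Theorem~\ref{thm:sbqp} says that for each fixed $\alpha$ there is no low-query $\mathsf{SBQP}$ algorithm solving \emph{all} instances, which via averaging / Yao-type reasoning gives, for each $j$, two specific instances (one ``yes'', one ``no'') that $Q_j$ cannot simultaneously handle — I diagonalize by committing $A^{(n_j)}$ to whichever of these instances defeats $Q_j$. Lengths not used by any stage get their sets fixed to satisfy the promise arbitrarily (e.g., $|A_i^{(n)}| = w(n)$, a ``no'' instance). I expect the main obstacle to be the bookkeeping that turns the \emph{worst-case} query lower bound of Theorem~\ref{thm:sbqp} into the defeat of a \emph{specific} $\mathsf{SBQP}$ machine via a single hard input at a single length — this is the routine-but-delicate step of noting that if $Q_j$ decided $\AndApxCount$ correctly on \emph{every} instance at length $n_j$ with few queries it would contradict the theorem, so there must exist an instance witnessing failure, and one must ensure the promise constraints and the ``fresh length, no interference'' conditions can all be met simultaneously across the countably many stages; none of this is hard, but it is where all the care lives.
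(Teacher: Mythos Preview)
Your proposal is correct and follows essentially the same approach as the paper: encode a pair of $\ApxCount$ instances at each input length, diagonalize against all $\mathsf{SBQP}^A$ machines using Theorem~\ref{thm:sbqp}, and derive (3) from (1), (2), closure of $\mathsf{QMA}$ under intersection, and $\mathsf{QMA}^A \subseteq \mathsf{SBQP}^A$. The paper simply fixes the concrete parameters $N = 2^n$, $w = 2^{n/2-1}$ (so the lower bound is $\Omega(2^{n/4})$) rather than discussing the choice, and compresses the diagonalization bookkeeping you spell out into a single sentence.
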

\begin{proof}
For an arbitrary function $A: \{0,1\}^* \to \{0,1\}$ and $i \in \{0,1\}$, define $A_i^n = \{x \in \{0,1\}^n : A(i, x) = 1\}$. Define the unary language $L^A_i = \{1^n : |A_i^n| \ge 2^{n/2}\}$. Observe that as long as $A$ satisfies the promise $|A_i^n| \ge 2^{n/2}$ or $|A_i^n| \le 2^{n/2-1}$ for all $n \in \mathbb{N}$, then $L^A_i \in \mathsf{SBP}^A$. Intuitively, the oracles $A$ that satisfy this promise encode a pair of $\ApxCount$ instances $|A_0^n|$ and $|A_1^n|$ for every $n \in \mathbb{N}$ where $N = 2^n$ and $w = 2^{n/2 - 1}$. 

Theorem \ref{thm:sbqp} tells us that an $\mathsf{SBQP}$ algorithm $Q$ that makes $o(2^{n/4})$ queries fails to solve $\AndApxCount$ on \textit{some} pair $(A_0, A_1)$ that satisfies the promise. Thus, one can construct an $A$ such that $L_0, L_1 \in \mathsf{SBP}^A$ and $L_0 \cap L_1 \not\in \mathsf{SBQP}^A$, by choosing $(A_0^n, A_1^n)$ so as to diagonalize against all $\mathsf{SBQP}$ algorithms.

Because $\mathsf{QMA}^A$ is closed under intersection for any oracle $A$, and because $\mathsf{QMA}^A \subseteq \mathsf{SBQP}^A$ for any oracle $A$, it must be the case that either $L_0 \not\in \mathsf{QMA}^A$ or $L_1 \not\in \mathsf{QMA}^A$.
\end{proof}

We remark that this gives an alternative construction of an oracle relative to which $\mathsf{SBP}$ is not closed under intersection. To our knowledge, this is the first that uses the polynomial method directly.

Using the guessing lemma (Lemma \ref{lem:guessing}), we can also place an explicit lower bound on the $\mathsf{QMA}$ complexity of $\ApxCount$:

\begin{corollary}\label{cor:qma_tradeoff}
Let $V^A$ be $\mathsf{QMA}$ verifier for the $\ApxCount$ with soundness and completeness errors $1/3$. Suppose $V^A$ receives a witness of length $m$ and makes $T$ queries to a set membership oracle $A$. Then $m \cdot T = \Omega\left(\min\left\{\sqrt{w},\sqrt{N/w}\right\}\right)$.
\end{corollary}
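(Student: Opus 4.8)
The plan is to combine the Guessing Lemma with the $\mathsf{SBQP}$ lower bound we already proved for $\AndApxCount$. The key conceptual point is that $\mathsf{QMA}$ is closed under intersection, so a $\mathsf{QMA}$ verifier for $\ApxCount$ yields a $\mathsf{QMA}$ verifier for $\AndApxCount$ with comparable parameters; then Lemma \ref{lem:guessing} converts this into a witness-free $\mathsf{SBQP}$ algorithm whose query count is essentially the product $m \cdot T$, and Theorem \ref{thm:sbqp} forces that product to be large.

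Concretely, first I would argue that from $V^A$ one obtains a $\mathsf{QMA}$ verifier $V'$ for $\AndApxCount$ (on an instance $(A_0, A_1)$) that receives a witness of length $O(m)$ and makes $O(T)$ queries, with soundness and completeness errors bounded away from $1/2$ — say $1/3$ after a constant number of repetitions, or by in-place amplification as alluded to in the introduction so as not to blow up the witness length. The verifier $V'$ expects a witness that is (morally) the concatenation of a witness for ``$|A_0| \ge 2w$'' and a witness for ``$|A_1| \ge 2w$'', runs $V^A$ on each of the two sub-instances using the appropriate half of the oracle, and accepts iff both sub-runs accept. Completeness is immediate; for soundness one uses that on a ``no'' instance of $\AndApxCount$ at least one $A_i$ has $|A_i| \le w$, so the corresponding sub-run rejects with high probability regardless of the (entangled) witness. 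Care is needed here because a cheating Merlin could entangle the two witness halves, but since the two sub-runs query disjoint oracles and one of them rejects whp on any state, a short amplification argument (or the standard in-place amplification of Marriott–Watrous, as the introduction notes) handles it; this is the step I expect to require the most care, though it is routine in spirit.

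Next I would apply Lemma \ref{lem:guessing} to $V'$: since $V'$ makes $O(T)$ queries and receives an $O(m)$-qubit witness with constant error, there is an $\mathsf{SBQP}$ algorithm $Q^{A_0,A_1}$ for $\AndApxCount$ that receives no witness and makes $O(m \cdot T)$ queries. Finally, Theorem \ref{thm:sbqp} says any such $\mathsf{SBQP}$ algorithm must make $\Omega\!\left(\min\{\sqrt{w},\sqrt{N/w}\}\right)$ queries, so $m \cdot T = \Omega\!\left(\min\{\sqrt{w},\sqrt{N/w}\}\right)$, which is exactly the claimed tradeoff. The only genuinely non-bookkeeping ingredient is the soundness argument for the intersected verifier; everything else is a direct chaining of the lemmas already established in the preliminaries.
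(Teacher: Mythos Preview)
Your proposal is correct and matches the paper's proof: amplify $V$ to error $1/6$ (the paper does this with fresh witnesses, keeping the witness length $O(m)$), run the amplified verifier on each of $A_0,A_1$ with a concatenated witness and accept iff both accept, then apply Lemma~\ref{lem:guessing} and Theorem~\ref{thm:sbqp}. Your entanglement worry is not really a non-bookkeeping step: $\mathsf{QMA}$ soundness holds against \emph{every} witness state, so on a ``no'' instance the sub-run on the small $A_i$ accepts with probability at most $1/6$ regardless of how the two halves are entangled, and a union bound finishes it.
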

\begin{proof}
Running $V$ a constant number of times with fresh witnesses to reduce the soundness and completeness errors, one obtains a verifier with soundness and completeness errors $1/6$ that receives an $O(m)$-length witness and makes $O(T)$ queries. Repeating twice with two oracles and computing the $\mathsf{AND}$, one obtains a $\mathsf{QMA}$ verifier $V'^{A_0,A_1}$ for $\AndApxCount$ with soundness and completeness errors $1/3$ that receives an $O(m)$-length witness and makes $O(T)$ queries. Applying the guessing lemma (Lemma \ref{lem:guessing}) to $V'$, there exists an $\mathsf{SBQP}$ algorithm $Q^{A_0,A_1}$ for $\AndApxCount$ that makes $O(m \cdot T)$ queries. Theorem \ref{thm:sbqp} tells us that $m \cdot T = \Omega\left(\min\left\{\sqrt{w},\sqrt{N/w}\right\}\right)$.
\end{proof}

Alternatively, one can conclude that $m + T = \Omega\left(\min\left\{w^{1/4},(N/w)^{1/4}\right\}\right)$. Furthermore, when $N = 2^{\Omega(w)}$, one can instead conclude that $m \cdot T = \Omega(w)$ and therefore $m + T = \Omega(\sqrt{w})$ using Theorem \ref{thm:L_N_large} in place of Theorem \ref{thm:L} in the proof of Theorem \ref{thm:sbqp}.

\section{Discussion and Open Problems}
The $\mathsf{QMA}$ lower bound for $\ApxCount$ is not optimal in general: when $w = O(1)$, there is no $\mathsf{QMA}$ protocol for $\ApxCount$ that receives a constant size witness and makes a constant number of queries for large $N$. Fundamentally, this shows that $\mathsf{SBQP}$ lower bounds cannot give optimal $\mathsf{QMA}$ lower bounds. However, our $\mathsf{SBQP}$ bounds themselves are not tight: can one improve the gap between the $\Omega\left(\min\left\{\sqrt{w},\sqrt{N/w}\right\}\right)$ lower bound and $O\left(\min\left\{w,\sqrt{N/w}\right\}\right)$ upper bound for the $\mathsf{SBQP}$ query complexity of $\AndApxCount$? From Theorem \ref{thm:L_N_large}, we know that the complexity must \textit{eventually} reach $\Omega(w)$ (at least when $N$ is exponentially large), so it seems reasonable to conjecture that the $\Omega(\sqrt{w})$ lower bound is not tight even for smaller values of $N$. On the other hand, we have also not made a serious attempt to improve the trivial $O(w)$-query $\mathsf{SBP}$ algorithm using Grover search (or similar techniques). Thus, it appears entirely possible that \textit{neither} bound is tight, perhaps depending on $N$.

At a deeper level, we would like to know if there is any meaningful connection between our use of Laurent polynomials and their use by Aaronson \citep{1808.02420} in studying the QSamples+queries model. One way we might hope to establish such a connection is to extend our proof to an $\mathsf{SBQP}$ lower bound in the QSamples+queries model by proving an analogue of Theorem \ref{thm:L} for Laurent polynomials. We remark that the argument in Theorem \ref{thm:L} that turns a symmetric bivariate polynomial $p(x, y) = p(y, x)$ into a univariate Laurent polynomial $\ell(t) = \ell(1/t)$ works just as well if $p(x, y)$ is a symmetric bivariate Laurent polynomial. Thus, essentially all that is needed is an analogue of Corollary \ref{cor:markov} for Laurent polynomials bounded on a set of equally spaced points\footnote{A priori, it might appear that Theorem \ref{thm:L} also breaks down for large $N$: as $t$ grows, the curve $(x = 2wt, y = 2w/t)$ approaches $y = 0$, where any Laurent polynomial with negative degree terms is clearly unbounded. However, the $\Omega(\sqrt{N/w})$ part of the lower bound really only applies when $N = O(w^2)$, so for the ranges of $N$ that we care about, it is sufficient to look at $t \in [1, 2w]$ over which $y \ge 1$.}.

\section{Acknowledgements}
I would like to thank Thomas Watson for suggesting the intersection approach to proving an $\mathsf{SBP}$-$\mathsf{QMA}$ oracle separation. I am also grateful to Scott Aaronson who supported this research, introduced me to this particular problem, and provided valuable guidance. I would also like to thank Robin Kothari, Justin Thaler, and Patrick Rall for their helpful feedback on this writing.

\printbibliography

\end{document}